\def\BibTeX{{\rm B\kern-.05em{\sc i\kern-.025em b}\kern-.08em
    T\kern-.1667em\lower.7ex\hbox{E}\kern-.125emX}}
\newtheorem{theorem}{Theorem}
\newtheorem{lemma}{Lemma}
\newtheorem{remark}{Remark}
\newtheorem{definition}{Definition}
\newcommand{\cN}{\mathcal{N}}
\newcommand{\cO}{\mathcal{O}}
\newcommand{\cM}{\mathcal{M}}
\newcommand{\cT}{\mathcal{T}}
\newcommand{\cS}{\mathcal{S}}
\newcommand{\cA}{\mathcal{A}}
\newcommand{\bS}{\mathbf{S}}
\newcommand{\bA}{\mathbf{A}}
\newcommand{\orbf}{\texttt{RMAB-F}\xspace}
\newcommand{\rb}{\texttt{RMAB}\xspace}
\newcommand{\mmua}{\texttt{mmDPT}\xspace}
\newcommand{\mmindex}{\texttt{mmDPT Index Policy}\xspace}
\newcommand{\mmTS}{\texttt{mmDPT-TS}\xspace}
\newcommand{\indexp}{\texttt{mmDPT Index}\xspace}
\newtheorem{defn}{Definition}
\newtheorem{assumption}{Assumption}
\begin{document}

\title{Structured Reinforcement Learning for Delay-Optimal Data Transmission in Dense mmWave Networks}

\author{Shufan~Wang, Guojun~Xiong,~\IEEEmembership{Student Member,~IEEE,} Shichen~Zhang, Huacheng~Zeng,~\IEEEmembership{Senior~Member,~IEEE,}
            Jian~Li,~\IEEEmembership{Member,~IEEE,} Shivendra~Panwar,~\IEEEmembership{Fellow,~IEEE} 

\thanks{S. Wang, G. Xiong,  and J. Li are with Stony Brook University, Stony Brook, NY, 11794.
E-mail: \{shufan.wang, guojun.xiong, jian.li.3\}@stonybrook.edu} 
\thanks{S. Zhang, H. Zeng are with Michigan State University, East Lansing, MI, 48824. E-mail: \{sczhang, hzeng\}@msu.edu }
\thanks{S. Panwar is with New York University, Brooklyn, NY, 11201. E-mail: sp1832@nyu.edu}
}

\markboth{IEEE Transactions on Wireless Communications, September~2023}%
{Shell \MakeLowercase{\textit{et al.}}: A Sample Article Using IEEEtran.cls for IEEE Journals}


\maketitle

\begin{abstract}
We study the data packet transmission problem (\mmua) in dense cell-free millimeter wave (mmWave) networks, i.e., users sending data packet requests to access points (APs) via uplinks and APs transmitting requested data packets to users via downlinks. Our objective is to minimize the average delay in the system due to APs' limited service capacity and unreliable wireless channels between APs and users. This problem can be formulated as a restless multi-armed bandits problem with fairness constraint (\orbf).  Since finding the optimal policy for \orbf is intractable, existing learning algorithms are computationally expensive and not suitable for practical dynamic dense mmWave networks.  In this paper, we propose a structured reinforcement learning (RL) solution for \mmua by exploiting the inherent structure encoded in \orbf.   To achieve this, we first design a low-complexity and provably asymptotically optimal index policy for \orbf. Then, we leverage this structure information to develop a structured RL algorithm called \mmTS, which provably achieves an $\tilde{\cO}(\sqrt{T})$ Bayesian regret.  More importantly, \mmTS is computation-efficient and thus amenable to practical implementation,  as it fully exploits the structure of index policy for making decisions.  
Extensive emulation based on data collected in realistic mmWave networks demonstrate significant gains of  \mmTS over existing approaches.  
\end{abstract}

\begin{IEEEkeywords}
Data Packet Transmission, Dense mmWave Networks, Structured Reinforcement Learning, Index Policy, Restless Multi-Armed Bandits
\end{IEEEkeywords}

%

\section{Introduction}\label{sec:intro}

\IEEEPARstart{M}{illimeter} wave (mmWave) is a key technology for current 5G and beyond wireless networks \cite{rappaport2013millimeter,andrews2014will,agiwal2016next}.
It offers multi-GHz bandwidth of licensed and unlicensed spectrum for communications.  
As expected, it will play a crucial role in dealing with increased multimedia traffic, and emerging applications such as multi-user wireless virtual reality (VR) for education, multi-player games and professional training, where high bandwidth data must be streamed to each user with low latency \cite{jog2018enabling,abari2017enabling}.

Realizing this vision requires a dense deployment of many access points (APs) in a mmWave network and an efficient \textit{data packet transmission policy}. Such a policy determines to send the data packet requests from users to the mmWave APs via uplink communication, which in turn transmit the requested data packets to users via downlink communication. 

Data packet transmission plays a pivotal role in enhancing load balancing, spectrum efficiency, energy efficiency of mmWave networks, and hence has gained much interest in recent years for the purpose of maximizing spectral \cite{niu2015survey,liu2016user,mezzavilla2018end} and energy efficiencies \cite{wang2018millimeter, li2022multi}.

Unfortunately, mmWave communication does not perform well in \textit{dynamic environments} due to its  vulnerability to blockage, sensitivity to mobility, and time-varying channel conditions. These factors lead to an intermittent link connectivity between a user and an AP, necessitating a dense deployment of APs to maintain communication reliability \cite{feng2017mmwave,yang2018low}. 

\begin{figure}[t]
	\centering
	\includegraphics[width=0.35\textwidth]{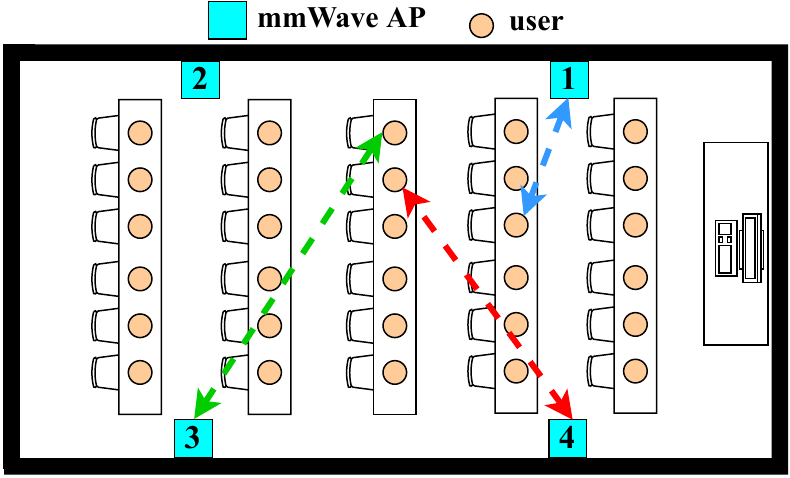}
	\caption{A dense mmWave network in a small conference room, where the dashed lines indicate communications between APs and users, i.e., users sending data packet requests to APs and APs transmitting real data packets to users. See Section~\ref{sec:exp} for more details on our mmWave testbed. }
	\label{fig:example}
	\vspace{-0.1in}
\end{figure}

In this paper, we consider such a dense, cell-free mmWave network where a set of APs serve a population of users in one area (e.g., a conference room, a concert hall, or a classroom). Suppose that all APs are reachable for all users. At each time, each user generates a data packet request, which is sent to one AP via the uplink communication. The corresponding AP then transmits the requested data packet to the user via the downlink communication.  Since only the data packet request is sent from users to APs through uplinks while the real data packets are transmitted from APs to users via downlinks, we assume that the uplink communication is reliable \cite{feng2017mmwave,yang2018low,singh2022user} and the downlink communication is unreliable.  Then, an important problem is: \textit{for each data packet request generated by a user, which AP should it be sent to so as to minimize the average delay due to the AP's limited service capacity and the unreliable downlink communications via which the requested data packet is transmitted from AP to the user?}

Consider the example shown in Figure~\ref{fig:example} of a conference room with 4 APs and 30 users.  This will be used as our running motivation and our mmWave testbed environments in Section~\ref{sec:exp}, but our model and proposed solutions will not be limited to this scenario. 
We are interested in designing a data packet transmission policy to minimize \textit{the average delay} 
in the system due to APs' limited service capacity and unreliable wireless channels between APs and users. In addition to minimizing the average delay, ensuring fairness among users is also a key design concern in wireless networks \cite{liu2003framework,hou2009theory,lan2010axiomatic,li2019combinatorial}.  To this end, we model the above data packet transmission problem (\mmua) in a dense mmWave network as a \textit{restless multi-armed bandits problem with fairness guarantee} (\orbf)\footnote{We refer to our \mmua problem as a \orbf, and will interchangeably/equivalently use these two terms in the rest of this paper.}, which is a generalization of the classical restless multi-armed bandits problem (\rb) \cite{whittle1988restless}.   Our objective is to develop low-complexity reinforcement learning (RL) algorithms to solve this \orbf without the knowledge of system dynamics (e.g., the unknown data packet arrivals and time-varying mmWave channel qualities).

\noindent\textbf{Limitations of Existing Methods.} Although online \rb has gained many efforts, existing solutions cannot be directly applied to our \orbf.  A key challenge is that off-the-shelf RL algorithms, e.g., colored-UCRL2 \cite{ortner2012regret} and Thompson sampling methods \cite{jung2019regret,akbarzadeh2022learning}, suffer from {an exponential computational complexity}, and {their regret bounds grow exponentially} with the size of state space.  This is due to the fact that these algorithms need to repeatedly solve complicated Bellman equations for making decisions, and hence appear too slow for practical use, especially in highly dynamic mmWave environments. Many recent efforts have been devoted to developing {low-complexity} RL algorithms with {order-of-optimal regret} for online \rb   \cite{fu2019towards,avrachenkov2022whittle,killian2021q,xiong2022reinforcement,xiong2022index,wang2020restless,xiong2022learning}; however, many challenges remain unsolved.  
For example, multi-timescale stochastic approximation algorithms \cite{fu2019towards,avrachenkov2022whittle,killian2021q} suffer from slow convergence and have no regret guarantee, and \cite{xiong2022index,
xiong2022reinforcement} considered a finite-horizon setting while we focus on an infinite-horizon average-award setting in this paper. Exacerbating these limitations is the fact that  none of them were designed with fairness constraints in mind.
For example, \cite{xiong2022learning,xiong2022index,xiong2022reinforcement,wang2020restless} only focused on minimizing costs/delay in \rb, and many existing RL or deep RL based policies for mmWave focused on maximizing throughput \cite{sun2018cell,zhang2021q,dogan2021reinforcement,dinh2021deep} with no finite-time performance analysis, while the controller in our \orbf faces a \textbf{new dilemma} on how to manage the balance between minimizing average delay and satisfying the fairness requirement.  
This adds a new layer of challenge to designing low-complexity RL algorithms for \rb that is already quite challenging.

\begin{figure}[t]
	\centering
	\includegraphics[width=0.48\textwidth]{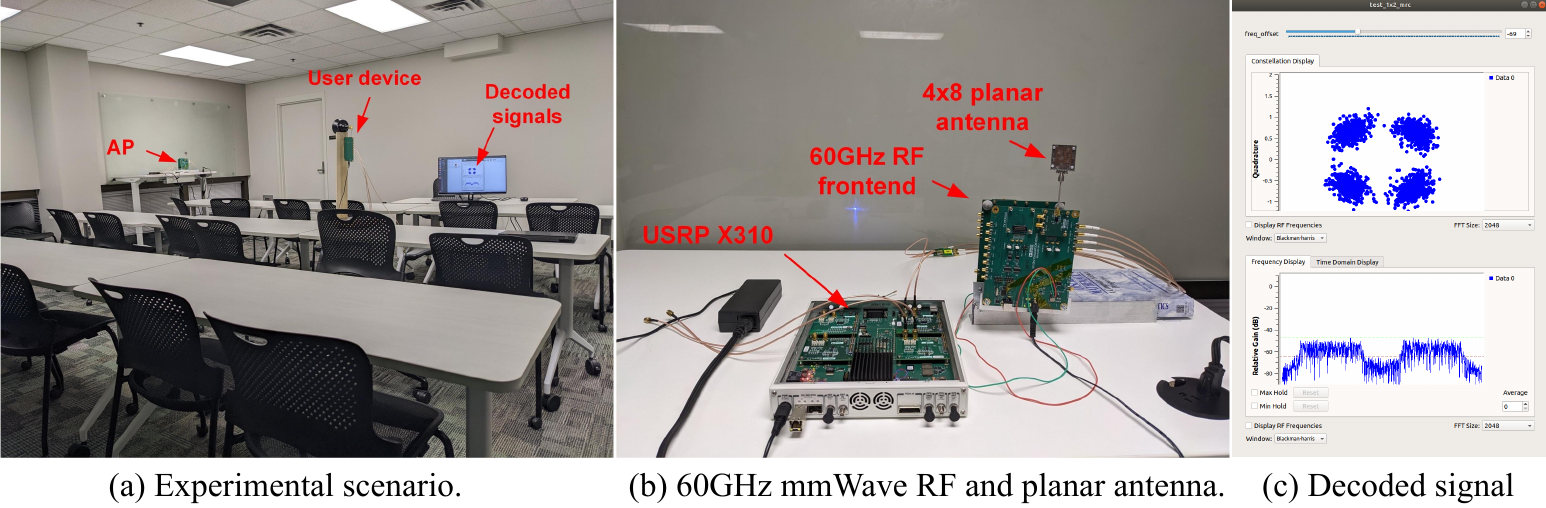}
	\caption{Measurement setup and experiment scenario for data packet transmission in dense mmWave networks.}
	\label{fig:testbed}
	\vspace{-0.1in}
\end{figure}

\noindent\textbf{Structured RL for \mmua.} The lack of theoretical understanding on how to design efficient RL algorithms for \orbf or \mmua motivates us to fill this gap by proposing \textit{structured RL solutions} in this paper.   Specifically, our structured RL solutions operate on a much smaller dimensional subspace by exploiting the inherent structure encoded in \orbf.  This requires us to \textit{first} design a low-complexity yet provably optimal index policy for \orbf, and \textit{then} RL algorithms that leverage the structure of index policies for making decisions to reduce the high computational complexity and exponential factor in regret analysis. We summarize our contributions as follows: 
\begin{itemize}
    \item \textbf{Provably Optimal Index Policy.} We first develop a low-complexity index policy for \orbf to address the dimensional concerns when the system dynamics are known in Section~\ref{sec:index}.  Specially, we leverage a linear programming (LP) based approach to obtain a relaxed problem of \orbf, which is formulated as a LP using occupancy measures \cite{altman1999constrained}.  We then construct a \indexp  based on the occupancy measures obtained from the LP. Finally, we propose a low-complexity \mmindex by carefully coupling the scheduling and fairness constraints to address the new dilemma via the above \indexp. We offer a proof to show that \mmindex is asymptotically optimal.
    \item \textbf{Structured RL Algorithm.} We further develop a low-complexity RL algorithm for \orbf without the knowledge of system dynamics in Section~\ref{sec:learning}.   Different from aforementioned off-the-shelf RL algorithms that either contend directly with an extremely large state space or do not incorporate the fairness constraint, we propose \mmTS, a structured Thompson sampling (TS) method that learns to leverage the inherent structure in \orbf via our near-optimal \mmindex for making decisions.  We show that \mmTS achieves an optimal sub-linear Bayesian regret with a low computational complexity, and hence can be easily implemented in realistic mmWave networks. To the best of our knowledge, our work is the first to develop a structured RL algorithm with low-complexity and order-of-optimal regret in the context of delay-optimal data packet transmission in dense, cell-free mmWave networks.    We note that our proposed frameworks of designing low-complexity index policies and structured RL algorithms are very general, and can be applied to various large-scale combinatorial problems with fairness constraints.  
\item \textbf{Evaluations on 60GHz mmWave Testbed.} 
We build a 60GHz mmWave testbed using software-defined radio (SDR) devices.  The mmWave device is equipped with a planar antenna with $4\times8$ patch elements. Our evaluation is conducted in a conference room, and Figure~\ref{fig:testbed} shows a photo of our testbed, see Section~\ref{sec:exp} for details.  Experimental results using data collected from our mmWave testbed demonstrate that our \mmTS produces significant performance gains over existing approaches.  
\end{itemize}

\section{System model and Problem Formulation}\label{sec:model}

In this section, we present the system model and formulate the delay minimization problem of data packet transmission in dense, cell-free mmWave networks (\mmua).

\begin{figure}[t]
    \centering
      \includegraphics[width=0.48\textwidth]{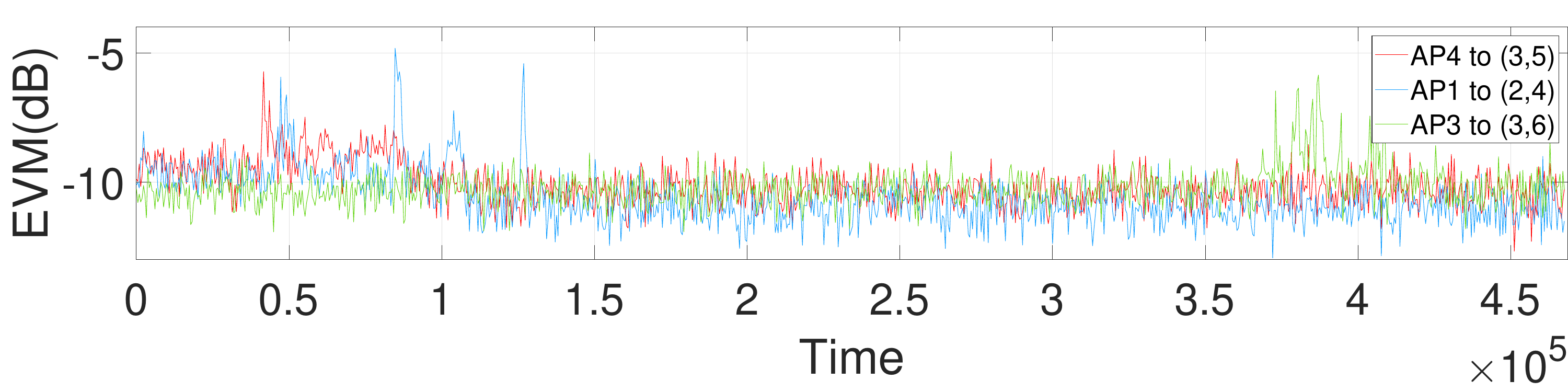}
      \caption{The measured error vector magnitude (EVM) of decoded signal constellations at  users which receive data packet from APs (via downlinks) in our mmWave testbed (See Section~\ref{sec:exp}).  The three curves correspond to the three transmissions in Figure~\ref{fig:example}. }
      \label{fig:evm}
	\vspace{-0.1in}
\end{figure}

\subsection{System Model}\label{sec:system1}

We consider a dense, cell-free mmWave network with a set of $\cN=\{1,\cdots, N\}$ mmWave APs serving one area (e.g., a conference room, a concert hall, or a classroom), where there is a set of $\cM=\{1,\cdots, M\}$ users. Consider the example shown in Figure~\ref{fig:example} of a conference room with 4 APs and 30 users.  This will be used as our running motivation and our mmWave testbed environments in Figure~\ref{fig:testbed},  but our model and proposed solutions will not be limited to this scenario.

Time is divided into multiple units with each unit called a ``slot", which is indexed by $t\in\cT=\{1,\cdots, T\}$.  At time slot $t,$ user $m$ generates a data packet request with probability $p_m$, which is sent to one AP for processing through uplinks available between APs and users. 
Since only requests are sent from users to APs, we assume that the uplink communication is reliable without delay. The rationality of this assumption is that only the request signal is sent via uplink communication, and the request indicator  is very small-sized (typically in the order of a few bits to a few tens of bits). The exact size may depend on the network configuration or technology leverage. As a result, the delay due to uplink communication is negligible \cite{feng2017mmwave,yang2018low,singh2022user,yao2022delay,cao2020delay}.

Without loss of generality (W.l.o.g.), we assume that requests generated by each user are independent of each other in each time slot. 
Upon receiving the request, the AP processes and transmits the requested data packet to that user through downlinks, which are often unreliable. 
A centralized controller is in charge of such a data packet transmission problem in the dense mmWave network in consideration.  This is mainly due to the fact that we mainly focus on a dense mmWave network such as a conference room as shown in Figure~\ref{fig:example}. Similar assumption applied to user association \cite{yao2022delay, cao2020delay} and beam alignment \cite{araujo2019beam, li2020beam} in a dense mmWave network.

In our system model, we add a ``request queue" to each AP $n$ for each user $m$ to store the number of data packet requests sent to AP $n$ at time slot $t$.  We denote the queue length as $S_{mn}^t$. The rationality of our model (i.e., each AP maintaining a request queue for each user) is that the number of data packet requests to  one AP may be larger than its {service capacity} 
limited by both computation and communication resources due to the dense nature of mmWave networks, and the data packet requests may not be processed immediately and hence there will be a service delay for users. Another point is due to the fact that the wireless channels between APs and users  are unreliable.  As a result, the time that AP $n$ takes to transmit the requested data packet  
to user $m$ is a random variable, which heavily relies on the mmWave channel quality, denoted as $q_{mn}^t$, between AP $n$ and user $m$ at time slot $t$.  For example, Figure~\ref{fig:evm} shows the error vector magnitude (EVM) measured at the receiver on our mmWave testbed, which vary significantly over time. Assume each processed packet is equally sized with $Q$ bits. Let $D_{mn}^t$ be the number of data packets that is successfully delivered from AP $n$ to user $m$ via downlinks at time slot $t$.  We model $D_{mn}^t$ as a random variable with probability distribution $\mathbb{P}(\cdot|q_{mn}^t)$ to reflect the randomness of wireless fading. Denote $C(q_{mn}^t)$ as the throughput of the wireless channel between AP $n$ and user $m$ at time slot $t$.  Therefore, the distribution of $D_{mn}^t$ can be formally given by
\begin{align}\label{eq:channel}
    \mathbb{P}(D_{mn}^t=d|q_{mn}^t)=\mathbb{P}((d+1)Q>C(q_{mn}^t)\geq dQ).
\end{align}

\subsection{MDP-based Problem Formulation}\label{subsec:MDP}
We formulate the delay minimization problem for \mmua
for the above model as a Markov decision process (MDP) \cite{puterman1994markov}.  

\textbf{State.}  We denote the queue length of data packet requests from user $m$ at time slot $t$ as $\bS_m^t:=(S_{m1}^t,\cdots,S_{mN}^t)$, where $S_{mn}^t$ is the number of data packet requests from user $m$ sent to AP $n$ at time slot $t$ as described above. Let $\bS^t:=(\bS_1^t,\cdots,\bS_M^t)$.  W.l.o.g., we assume $S_{mn}^t\leq S_{\text{max}},\forall m, n, t$, where $S_{\text{max}}$ is the maximum number of data packet requests from a user sent to an AP, and can be arbitrarily large but bounded.  For ease of readability, we denote the finite state space in our model as $\cS.$

\textbf{Action.}  Action $A_{mn}^t=1$ means that the centralized controller determines to send the data packet request from user $m$ to AP $n$ via the uplink channel at time slot $t$; 
and $A_{mn}^t=0$, otherwise.  Denote $\cA=\{0,1\}$ and let $\bA_m^t:=(A_{m1}^t,\cdots,A_{mN}^t)$, $\bA^t:=(\bA_1^t,\cdots,\bA_M^t)$.  Since at most one data packet request can be sent from a user to an AP at each time slot, we have 
\begin{align}\label{eq:association-condition1}
\sum_{n\in\cN} A_{mn}^t\leq1,~\forall m\in\cM, t\in\cT.
\end{align} 
In addition, we impose a \textit{fairness} constraint among APs (e.g., due to resource constraints). 
Specifically, at most $B$ data packet requests can be simultaneously sent to any AP at any time slot, i.e., 
\begin{align}\label{eq:association-condition2}
    \sum_{m\in\cM} A_{mn}^t \leq B,~\forall n\in\cN, t\in\cT. 
\end{align}
A data packet transmission policy $\pi$ in a dense mmWave network maps the states of all queues $\bS^t$ to transmission decisions $\bA^t$, i.e., $\bA^t=\pi(\bS^t)$.  Denote the set of all feasible policies as $\Pi.$

 \textbf{Controlled Transition Kernel.}  As aforementioned, when there are $S_{mn}^t=S$ data packet requests in the queue, AP $n$ may process and successfully transmit $d\leq S$ packets to user $m$, which occurs with probability $\mathbb{P}(D_{mn}^t=d|q_{mn}^t)$ as defined in~(\ref{eq:channel}). If a new data packet request is generated, and sent to AP $n$ at the same time, then the length of corresponding request queue becomes $S+1-d$. More precisely, for $\forall d\in[0, S_{mn}^t]$, we have $\mathbb{P}(S_{mn}^{t+1}=S+1 -d| S_{mn}^t=S, A_{mn}^t=1) = p_m\mathbb{P}(D_{mn}^t=d|q_{mn}^t).$
 Otherwise, the queue length becomes $S-d$, i.e., $\mathbb{P}(S_{mn}^{t+1}=S-d| S_{mn}^t=S, A_{mn}^t=0)=\mathbb{P}(D_{mn}^t=d|q_{mn}^t).$
Similarly, if AP $n$ can successfully transmit more than $S_{mn}^t$ packets, i.e., $d>S_{mn}^t$, then we have $\mathbb{P}(S_{mn}^{t+1}=1| S_{mn}^t=S, A_{mn}^t=1)=p_m\left(1-\sum_{d=0}^{S_{mn}^t}\mathbb{P}(D_{mn}^t=d|q_{mn}^t)\right),$
when a new data packet request from user $m$ is sent to AP $n$ at the same time; and otherwise $\mathbb{P}(S_{mn}^{t+1}=0| S_{mn}^t=S, A_{mn}^t=0)=1-\sum_{d=0}^{S_{mn}^t}\mathbb{P}(D_{mn}^t=d|q_{mn}^t).$ The overall transition probability from user $m$ to AP $n$ is denoted as $P_{mn}(s|s^{\prime},a^{\prime} )$.

\textbf{Data Packet Transmission Problem.} Our objective is to design a policy $\pi$ that minimizes the \textit{average delay} in a dense, cell-free mmWave network due to APs' limited service capacity and unreliable wireless channels between APs and users, while ensuring that each data packet request can only be sent to one AP, and no more than $B$ data packet requests can be sent to any AP at any time slot.  By Little's Law, the average delay minimization problem is equivalent to minimizing the average total number of requests in the system.  Therefore, the data packet transmission problem in dense mmWave networks (\mmua) can be formulated as the following MDP: 
\begin{align} \label{eq:original-mdp}
   \textbf{\mmua}:~ &\min_{\pi\in\Pi}~\limsup_{T \rightarrow \infty} \mathbb{E}_{\pi}\Bigg[\frac{1}{T} \sum_{t=1}^T \Bigg( \sum_{n=1}^N\sum_{m=1}^M S_{mn}^t\Bigg)\Bigg]\nonumber\displaybreak[0]\\
     &\text{subject to Constraints~(\ref{eq:association-condition1}) and~(\ref{eq:association-condition2})},
\end{align}
where the subscript denotes the fact that the expectation is taken with respect to the measure induced by policy $\pi$.  Problem \mmua~(\ref{eq:original-mdp}) is an example of the \orbf, and in theory it can be solved optimally as an infinite-horizon average cost per stage problem using relative value iteration \cite{puterman1994markov}. However, this approach suffers from the curse of dimensionality, i.e., the computational complexity grows exponentially in the size of state space as a function of the number of user $M$, rendering such a solution impractical.  In addition, this approach lacks insight for the solution structure.  We overcome this difficulty by developing an index-based policy that is computationally appealing and provably optimal.

\section{Index Policy Design and Analysis}\label{sec:index}

We now propose an index policy for Problem \mmua~(\ref{eq:original-mdp}).  We begin by introducing a so-called ``relaxed problem", which can be posed as a LP problem.  The solution to this LP forms the building block of our proposed index policy, which we prove to be asymptotically optimal.

\subsection{The Relaxed Problem}

Following Whittle's approach \cite{whittle1988restless}, we first relax the instantaneous constraints in Problem \mmua~(\ref{eq:original-mdp}) to average constraints, and obtain the following ``relaxed problem":  
\begin{align}\label{eq:relaxed-mdp}
    \min_{\pi\in\Pi} \quad&\limsup_{T \rightarrow \infty} \mathbb{E}_{\pi}\Bigg[\frac{1}{T} \sum_{t=1}^T \Bigg( \sum_{n=1}^N\sum_{m=1}^M S_{mn}^t\Bigg) \Bigg]\nonumber\displaybreak[0]\\
        \text{s.t.} \quad&\limsup_{T \rightarrow \infty}\mathbb{E}_{\pi} \Bigg[\frac{1}{T} \sum_{t=1}^{T} \sum_{n=1}^{ N} A_{mn}^t \Bigg]\leq 1,~ \forall m\in \cM, \nonumber\displaybreak[1]\\
    & \limsup_{T \rightarrow \infty} \mathbb{E}_{\pi} \Bigg[\frac{1}{T} \sum_{t=1}^{T} \sum_{m=1}^{ M} A_{mn}^t\Bigg] \leq B, ~\forall n\in \cN. 
\end{align}
It is clear that the optimal value achieved by \eqref{eq:relaxed-mdp} is a lower bound of that achieved by Problem \mmua \eqref{eq:original-mdp}.  It is also known that the relaxed problem \eqref{eq:relaxed-mdp} can be reduced to an equivalent LP using occupancy measures \cite{altman1999constrained}.    

\begin{defn}
The occupancy measure $\Omega_\pi$ of a stationary policy $\pi$ for the infinite-horizon MDP is defined as the expected average number of visits to each state-action pair $(s,a)$, i.e., 
\begin{align}\label{eq:occupancy}
    \Omega_\pi\!=\!\Bigg\{\omega_{mn}(s,a)&\triangleq\lim_{T\rightarrow \infty} \frac{1}{T}\mathbb{E}_\pi\left(\sum_{t=1}^{T}\mathds{1}(S_{mn}^t\!=\!s, A_{mn}^t\!=\!a)\right)\nonumber\displaybreak[0]\\
    &: \forall m\in\cM, n\in\cN, s\in\cS, a\in\cA\Bigg\}.
\end{align}
\end{defn}
It can be easily checked that the occupancy measure satisfies $\sum_{s\in\cS}\sum_{a\in\cA}\omega_{mn}(s,a)=1$, and hence $\omega_{mn}, \forall m\in\cM, n\in\cN$ is a probability measure. Using this definition, the relaxed problem \eqref{eq:relaxed-mdp} can be equivalently reformulated as a LP \cite{altman1999constrained}:
\begin{subequations}
\begin{align} 
 \min_{\omega_{mn}\in\Omega_\pi}& \sum_{n=1}^N \sum_{m=1}^M  \sum_{s\in\cS} \sum_{a\in\cA} \omega_{mn}(s,a) s\displaybreak[0]\label{eq:LP}\\
\text{s.t.}&\sum_{n=1}^N \sum_{s\in\cS} \omega_{mn}(s,1) \leq 1,~ \forall m \in \cM, \displaybreak[1]\label{eq:LP-constraint1}\\
&\sum_{m=1}^M \sum_{s\in\cS}   \omega_{mn}(s,1) \leq B, ~\forall n \in \cN, \displaybreak[2]\label{eq:LP-constraint2}\\
&\sum_{s^\prime\in\cS}\sum_{a\in\cA} \omega_{mn}(s,a)P_{mn}(s^\prime|s,a),\nonumber\displaybreak[3]\\
&=\sum_{s^\prime\in\cS}\sum_{a\in\cA} \omega_{mn}(s^\prime,a)P_{mn}(s|s^\prime,a), ~\forall s\in\cS,\label{eq:LP-constraint3}\\
&\sum_{s\in\cS}\sum_{a\in\cA} \omega_{mn}(s,a) = 1,~\forall m\in\cM, n\in\cN, \label{eq:LP-constraint4}
\end{align}
\end{subequations}
where~(\ref{eq:LP-constraint1}) and~(\ref{eq:LP-constraint2}) are restatements of constraints~(\ref{eq:association-condition1}) and~(\ref{eq:association-condition2}), respectively; ~(\ref{eq:LP-constraint3}) represents the fluid transition of the occupancy measure, which holds due to {the ergodic theorem for finite MDPs \cite{puterman1994markov, verloop2016asymptotically}, that under optimal solutions, the occupancy measure will be stable under the transition, where fluid in rate for a state-action occupancy measure equals to the fluid out rate}; and~(\ref{eq:LP-constraint4}) follows from the fact that the occupancy measure is a probability measure.

Let $\omega^*=\{\omega^*_{mn}(s,a): \forall m\in\cM, n\in\cN, s\in\cS, a\in\cA\}$ be an optimal solution to the above LP~(\ref{eq:LP})-(\ref{eq:LP-constraint4}).  We now construct a Markovian stationary policy $\chi^*=\{ \chi^*_{mn}(s,a), \forall m\in\cM, n\in\cN\}$ from $\omega^*$ as follows: if the number of requests in the queue from user $m$ at AP $n$ at time slot $t$ is $s$, then $\chi^*_{mn}(s,a)$ chooses action $a$ with a probability equal to 
\begin{align}\label{eq:markovian-policy}
\chi^*_{mn}(s,a) := \frac{\omega^*_{mn}(s,a)}{\sum_{a\in\cA}\omega^*_{mn}(s,a)}. 
\end{align}
Unfortunately, the above policy~(\ref{eq:markovian-policy}) does not always provide a feasible solution to Problem \mmua~(\ref{eq:original-mdp}).  This is due to the fact that both request transmission constraints in~(\ref{eq:original-mdp}) must be strictly met at each time slot, instead of just in the average sense as in~(\ref{eq:relaxed-mdp}).  Exacerbating this issue is that with a randomized policy, both relaxed constraints may be violated severely during each time slot, resulting in poor performance.  To overcome this challenge, we next introduce a computationally appealing index policy for Problem \mmua~(\ref{eq:original-mdp}).

\subsection{The mmDT Index Policy}\label{sec:mmindexpolicy}

Conventional index policies including Whittle index policy \cite{whittle1988restless} and many others \cite{verloop2016asymptotically,zou2021minimizing,hu2017asymptotically,zayas2019asymptotically,zhang2021restless,xiong2022reinforcement,xiong2022index,xiong2022learning,singh2022user,xiong2022reinforcementcache,sun2018cell} simply schedule a request to the highest indexed AP, i.e., AP $n^*=\arg\max_n \chi_{mn}^* (s,1)$ forms channel, through via a request from user $m$ at time slot $t$ is sent to AP $n^*$. Unfortunately, such a simple index policy will not work for Problem \mmua~(\ref{eq:original-mdp}) since it only accounts for constraint~(\ref{eq:association-condition1}) but ignores the new dilemma faced by the controller in our \orbf, which is introduced by the fairness constraint~(\ref{eq:association-condition2}), i.e., at most $B$ requests can be sent to one AP at each time slot. Intuitively, to capture both constraints, the transmission decisions should involve some couplings between requests and APs.  For simplicity, we denote 
\begin{align}\label{eq:mmindex}
    \phi_{mn}(s) = \chi_{mn}^* (s,1),
\end{align}
and call it the \indexp for requests from user $m$ at AP $n$ when its state is $s$.  To address the aforementioned issue when applying existing index policies, our \mmindex prioritizes the request from user $m$ at time slot $t$ to AP $n$ according to a decreasing order of their \indexp, and transmits requests to APs based on \indexp as long as both constraints (\ref{eq:association-condition1}) and~(\ref{eq:association-condition2}) are satisfied.

Specifically, at each time slot $t$, we first locate the users with new generated requests, and denote the set of these users as $\mathcal{M}^t:=\{m| \text{user}~ m~\text{generates new requests}\}$. We construct the \indexp set $\mathcal{I}(t):=\{\phi_{mn}(S_{mn}^t), \forall m\in\mathcal{M}^t, n\in\mathcal{N}\}$ with all elements in $\mathcal{I}(t)$ sorted in a decreasing order.  Denote the largest index in $\mathcal{I}(t)$ as $\phi_{m^\prime n^\prime}(S_{m^\prime n^\prime}^t)$.
Then we check if $B$ requests have been sent to AP $n^\prime$, which leads two cases:
i) if not, AP $n^\prime$ activateS channel with user $m^\prime$ and remove all indices related with $m^\prime$, i.e., $\mathcal{I}(t)=\mathcal{I}(t)\setminus\{\phi_{m^\prime n}(S_{m^\prime n}^t), \forall n\in\mathcal{N}\}$; 
ii) otherwise, we remove all indices related with AP $n^\prime$ such that $\mathcal{I}(t)=\mathcal{I}(t)\setminus\{\phi_{mn^\prime}(S_{mn^\prime}^t), \forall M\in\mathcal{M}^t\}$. We repeat this process until $\mathcal{I}(t)$ is empty (i.e. all requests are satisfied).

\begin{remark}\label{remark:index}
Unlike Whittle-based policies \cite{whittle1988restless,hodge2015asymptotic,glazebrook2011general,zou2021minimizing}, our \mmindex does not require the indexability condition, which is often hard to establish \cite{nino2007dynamic}.
Like Whittle-based policies, our \mmindex is computationally efficient since it is merely based on solving a LP, which can be efficiently solved in polynomial time \cite{karmarkar1984new,dunagan2004simple,kelner2006randomized}, and we leverage the Gurobi Optimizer \cite{gurobi} in our experiments.  A line of works \cite{hu2017asymptotically,zayas2019asymptotically,zhang2021restless,xiong2022reinforcement,xiong2022index} designed index policies without indexability requirement for finite-horizon \rb, and hence cannot be applied to our infinite-horizon average-cost formulation in Problem \mmua~(\ref{eq:original-mdp}).  
We note that the design of our index policy is largely inspired by the LP based approach in \cite{xiong2022learning}.  However, \cite{xiong2022learning} only accounted for constraint~(\ref{eq:association-condition1}), while our \mmindex faces the new dilemma due to  fairness constraint~(\ref{eq:association-condition2}).  Further distinguishing our work is that we propose a structured RL algorithm via Thompson sampling with a provably sub-linear Bayesian regret in Section~\ref{sec:learning}.

\end{remark}

\begin{algorithm}[t]
	\caption{\mmindex}
	\label{alg:FARI}
	\begin{algorithmic}[1]
		\State Construct LP~(\ref{eq:LP})-(\ref{eq:LP-constraint4}) and solve the occupancy measure $\omega_{mn}^*(s,a), \forall m,n,s,a$;
		\State Compute $\chi_{mn}^* (s,a)$ according to~(\ref{eq:markovian-policy}) and 
		construct \indexp  $\phi_{mn}(s)=\chi_{mn}^* (s,1)$ in~(\ref{eq:mmindex});
		\For{At each time slot $t$}
		\State Locate $\mathcal{M}^t:=\{m| \text{user}~ m ~\text{generates new requests}\}$;
		\State Construct the \indexp set $\mathcal{I}(t):=\{\phi_{mn}(S_{mn}^t), \forall m\in\mathcal{M}^t, n\in\mathcal{N}\}$ with elements sorted in a decreasing order;
		\While{$\mathcal{I}(t)$ is non-empty}
		\State Find the largest index $\phi_{m^\prime n^\prime}(S_{m^\prime n^\prime}^t)$ in $\mathcal{I}(t)$;
		\If{Fewer than $B$ requests transmitted to AP $n^\prime$}
		\State AP $n^\prime$ activates channel and transmit request from user $m^\prime$, and remove all indices related with $m^\prime$, i.e., $\mathcal{I}(t)=\mathcal{I}(t)\setminus\{\phi_{m^\prime n}(S_{m^\prime n}^t), \forall n\in\mathcal{N}\}$; 
		\Else
		\State Remove all indices related with $n^\prime$ such that $\mathcal{I}(t)=\mathcal{I}(t)\setminus\{\phi_{mn^\prime}(S_{mn^\prime}^t), \forall M\in\mathcal{M}^t\}$.
		\EndIf
		\EndWhile
		 \EndFor
	\end{algorithmic}
\end{algorithm}

\subsection{Asymptotic Optimality}\label{sec:asymptotic} 

We now show that our \mmindex is asymptotically optimal in the same asymptotic regime as that in Whittle \cite{whittle1988restless} and many others \cite{weber1990index,verloop2016asymptotically,zou2021minimizing}.  With some abuse of notation, let the number of users and APs be $\rho M$ and $\rho N,$ and the resource constraint be $\rho B$ in the asymptotic regime with $\rho\rightarrow\infty.$ In other words, we consider $M$ classes of users with each class containing $\rho$, and similarly for the APs and fairness constraint.  Denote $X_{mn}^\rho(\pi^*,s,a;t)$ as the number of requests from class-$m$ users with the state at class-$n$ APs being $s$ and action $a$ being taken at time slot $t$ under \mmindex $\pi^*$.  We will be interested in the this fluid-scaling process with parameter $\rho$, and define the expected long-term average cost as 
    $V_{\pi^*}^\rho:=\limsup_{T\rightarrow \infty} \frac{1}{T} \mathbb{E}_{\pi^*} \sum_{t=1}^T\sum_{n=1}^{N}\sum_{m=1}^{M}\sum_{(s,a)} s\frac{X_{mn}^\rho(\pi^*,s,a;t)}{\rho}.$
Our \mmindex $\pi^*$ is asymptotically optimal only when $V_{\pi^*}^\rho\leq V_{\pi}^\rho$, $\forall \pi\in\Pi.$ W.l.o.g., we let $\pi^{opt}$ denote the optimal policy for Problem \mmua~(\ref{eq:original-mdp}). Before presenting our main result in this section, we first state the following technical condition called ``global attractor'' \cite{weber1990index}.

\begin{definition}\label{assump:global_attractor}
An equilibrium point $X^{\rho,*}/\rho$ under \mmindex $\pi^*$ is a global attractor for the process ${X^\rho(\pi^*;t)}/{\rho}$, if, for any initial point ${X^\rho(\pi^*;0)}/{\rho}$,
the process ${X^\rho(\pi^*;t)}/{\rho}$ converges to $X^{\rho,*}/\rho.$
\end{definition}

The global attractor indicates that all trajectories converge to $X^{\rho,*}$.  Though it may be difficult to establish analytically that a fixed point is a global attractor for the process \cite{verloop2016asymptotically}, such an assumption has been widely made in \cite{weber1990index,verloop2016asymptotically,zou2021minimizing,hodge2015asymptotic} and is only verified numerically.  Our experimental results in Section~\ref{sec:exp} show that such convergence indeed occurs for our \mmindex $\pi^*$.

\begin{theorem}\label{thm:asym_opt}
Our \mmindex $\pi^*$ is asymptotically optimal under Definition  \ref{assump:global_attractor}, i.e., $\lim_{\rho\rightarrow\infty}V_{\pi^*}^\rho-V_{\pi^{opt}}^\rho=0.$
\end{theorem}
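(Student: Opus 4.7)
The plan is to follow the now-standard fluid-limit route used by Weber--Weiss and extended in Verloop and in Zou et al.\ to LP-based index policies. I would organize the argument around three ingredients: (i) the relaxed LP provides a valid lower bound for $V_{\pi^{opt}}^\rho$ at every scale $\rho$; (ii) the deterministic fluid limit driven by $\pi^*$ admits a unique equilibrium whose value exactly matches that LP bound; and (iii) the global attractor condition in Definition~\ref{assump:global_attractor}, together with a uniform integrability argument, lets me transport this equilibrium value back to the stochastic system and close the gap.

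First I would set up the scaled occupancy process $\bar X^\rho(\pi;t)(s,a)=X^\rho(\pi,s,a;t)/\rho$ and observe that, per the LP~(\ref{eq:LP})--(\ref{eq:LP-constraint4}), for every feasible policy $\pi$ (including the optimum $\pi^{opt}$) the time-averaged occupancy is feasible for the relaxed problem~(\ref{eq:relaxed-mdp}). Hence $V_{\pi^{opt}}^\rho\geq V_{\text{LP}}$, where $V_{\text{LP}}=\sum_{m,n,s,a}\omega^*_{mn}(s,a)s$. This step is essentially a restatement of the relaxation argument in Section~\ref{sec:index} and is routine.

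The main technical step is to identify the equilibrium of the scaled process under $\pi^*$ and to show that it achieves $V_{\text{LP}}$. I would take a typical trajectory and write a fluid ODE/map for $\bar X^\rho(\pi^*;t)$ by taking $\rho\to\infty$; by the functional law of large numbers for Markov chains driven by a stationary index rule, the limit satisfies a deterministic recursion whose fixed points coincide with the stationary measures $\omega$ satisfying (a) the balance equations~(\ref{eq:LP-constraint3}), (b) the scheduling and fairness constraints~(\ref{eq:LP-constraint1})--(\ref{eq:LP-constraint2}) in the \emph{equality} form imposed by the priority/greedy tie-breaking of \mmindex, and (c) the normalization~(\ref{eq:LP-constraint4}). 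The key observation is that $\omega^*$, obtained from the LP, is itself such a fixed point: complementary slackness in the LP tells us exactly which $(m,n,s)$-indices $\phi_{mn}(s)$ are strictly positive, and the greedy construction in Algorithm~\ref{alg:FARI} reproduces precisely the same allocation in the fluid, so $X^{\rho,*}/\rho$ evaluated on the cost yields $V_{\text{LP}}$.

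Next I would invoke the global attractor hypothesis: for any initial condition $\bar X^\rho(\pi^*;0)$, the scaled process $\bar X^\rho(\pi^*;t)$ converges to $X^{\rho,*}/\rho$. Combined with the propagation-of-chaos / fluid-limit theorem this implies that the stationary distribution of $\bar X^\rho(\pi^*;\cdot)$ concentrates on $X^{\rho,*}/\rho$ as $\rho\to\infty$, so
\begin{align*}
\lim_{\rho\to\infty} V_{\pi^*}^\rho
=\sum_{m,n,s,a}\omega^*_{mn}(s,a)\,s
=V_{\text{LP}}
\leq \lim_{\rho\to\infty}V_{\pi^{opt}}^\rho.
\end{align*}
Since $V_{\pi^*}^\rho\geq V_{\pi^{opt}}^\rho$ by optimality of $\pi^{opt}$, the two limits must coincide, giving $\lim_{\rho\to\infty}(V_{\pi^*}^\rho-V_{\pi^{opt}}^\rho)=0$.

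The step I expect to be the main obstacle is the interchange of the $T\to\infty$ (time-average) and $\rho\to\infty$ (fluid) limits needed to push the pathwise attractor conclusion into an expected-cost statement. Because the state space is unbounded in principle ($S_{mn}^t\le S_{\max}$ is only an a priori cap, and the cost $s$ grows linearly in the state), I would need a uniform integrability / tightness argument for $\{\bar X^\rho(\pi^*;t)\}_{\rho,t}$ — typically obtained via a Lyapunov drift bound on the total queue length under $\pi^*$, exploiting the fact that the LP feasibility~(\ref{eq:LP-constraint1})--(\ref{eq:LP-constraint2}) implies a strict stability margin whenever a feasible solution to the original scheduling constraints exists. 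Once this uniform integrability is in hand, the Portmanteau-type convergence of $\bar X^\rho(\pi^*;\cdot)$ to the Dirac mass at $X^{\rho,*}/\rho$ upgrades to convergence of the linear cost functional, closing the proof.
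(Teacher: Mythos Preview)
Your proposal is correct and follows essentially the same fluid-limit route as the paper's proof: the LP lower bound, identification of the fluid equilibrium with $\omega^*$, and invocation of the global attractor assumption so that the stationary law of the scaled process concentrates on $\omega^*$ and attains $V_{\text{LP}}$. The only over-complication is your uniform integrability concern: since $S_{\max}$ is assumed finite the state space and per-step cost are bounded, and the paper simply cites tightness of the invariant distributions (via the density-dependent population process framework of Gast and of Verloop) to handle the interchange of limits, so no Lyapunov drift argument is needed.
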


\section{Structured Reinforcement Learning}\label{sec:learning}

The computation of \mmindex requires the knowledge of transition probabilities associated with MDPs (see Section~\ref{subsec:MDP}).  Unfortunately, the mmWave environments are highly dynamic and these parameters are often unknown and time varying.  Hence, we now consider to learn the \mmua (i.e., \orbf) without the knowledge of system dynamics.  Our goal is to develop a low-complexity structured RL algorithm and characterize its finite-time performance.

\subsection{Structured RL Algorithm: \mmTS }\label{sec:learning-alg}

\noindent\textbf{Algorithm Overview.} We adapt the Thompson Sampling (TS) method to our problem.  Specifically, we design a structured TS algorithm via  \mmindex awareness, entitled  \mmTS, as summarized in Algorithm~\ref{alg:TS}. For ease of expression, denote the true transition kernel of the MDP associated with requests from user $m$ at AP $n$, i.e.,  $P_{mn}(s^\prime|s,a), \forall s, a$ as $\theta^*_{mn}, \forall m \in \mathcal{M}, n\in \mathcal{N},$ which is unknown to the controller.  Let $h_{mn}^t = (S_{mn}^1,A_{mn}^1,S_{mn}^2,A_{mn}^2, \cdots,S_{mn}^t,A_{mn}^t)$ be the history of states and actions up to time slot $t, \forall m \in \mathcal{M}, n\in \mathcal{N}.$  We focus on a Bayesian framework, and denote $\mu_{mn}^1$ as the prior distribution for $\theta_{mn}^*, \forall m \in \mathcal{M}, n\in \mathcal{N}$, i.e., $\mathbb{P}(\theta_{mn}^*\in\Theta)=\mu_{mn}^1(\Theta)$ for any arbitrary set $\Theta$.
\mmTS operates in episodes and decomposes the total operating time $T$ into $K$ episodes (the value of $K$ will be specified later). Let $t_k$ be the start time of episode $k$ and $T_k = t_{k+1} - t_{k}$ be the length of the episode, satisfying $T=\sum_{k}T_k$. W.l.o.g., we set $T_0=1.$
Each episode consists of two phases: posterior updates and policy execution.

\textbf{Posterior Updates.}
At time slot $t$, the posterior distribution $\mu_{mn}^t, \forall m, n$ can be computed based on the history $h_{mn}^t$, i.e., $\mu_{mn}^t(\Theta)=\mathbb{P}(\theta_{mn}^*\in\Theta|h_{mn}^t)$, for any set $\Theta$. After applying action $A_{mn}^t$ and observing the next state $S_{mn}^{t+1}$, the posterior distribution $\mu_{mn}^{t+1}$ can be updated using Bayes' rule as:
\begin{align}\label{eq:bayes}
\mu_{mn}^{t+1}(d\theta) = \frac{\theta_{mn}(S_{mn}^{t+1}|S_{mn}^{t},A_{mn}^{t})\mu_{mn}^t(d \theta)}{\int \tilde{\theta}_{mn}(S_{mn}^{t+1}|S_{mn}^{t},A_{mn}^{t})\mu_{mn}^t(d \tilde{\theta})}, \forall m, n.
\end{align}

To execute the constructed \mmindex, we need to determine when episode $k$ terminates. Let $C_{mn}^{t}(s,a)$ be the number of visits to state-action pairs $(s,a)$ until $t$ for the MDP associated with requests from user $m$ at AP $n$, satisfying
$C_{mn}^{t}(s,a)=C_{mn}^{t-1}(s,a) +\mathds{1}(S_{mn}^t=s,A_{mn}^t=a),\forall s,a,m,n.$ Inspired by \cite{ouyang2017learning}, episode $k$ ends if its length is no less than that of episode $k-1$, or the number of visits to some state-action pairs $(s,a)$ satisfies $C_{mn}^{t_{k+1}}(s, a)>2C_{mn}^{t_{k}}(s, a),$ $\forall m,n$. {Thus, $t_0=1$} and $t_{k+1}, k\geq 1$ is given by $t_{k+1} = \min\{t>t_k: t>t_k + T_{k-1}~or~C_{mn}^{t_{k+1}}(s, a)>2C_{mn}^{t_{k}}(s, a), \forall m,n,s,a\}.$

\textbf{Policy Execution.} At the policy execution phase of each episode,
\mmTS constructs and executes \mmindex. This is the key contribution and novelty of our proposed structured RL algorithm \mmTS, which leverages our proposed near-optimal \mmindex for making decisions, instead of contending directly with an extremely large state-action space (e.g., via solving complicated Bellman equations).  These together contribute to the sub-linear Bayesian regret of \mmTS with a low computational complexity, which will be discussed in detail later. 
Specifically, at the beginning of episode $k$, the parameters $\{\theta_{mn}^k, \forall m,n\}$ are sampled from the posterior distributions $\{\mu_{mn}^{t_k}, \forall m, n\}$.  Using these samples, \mmTS solves the following LP:
\begin{align}\label{eq:episodic_om}
 \min_{\{\omega_{mn}\}}\quad& \sum_{n=1}^N \sum_{m=1}^M  \sum_{s\in\cS} \sum_{a\in\cA} \omega^k_{mn}(s,a) s\nonumber\displaybreak[0]\\ 
\text{s.t.}\quad
&\sum_{s^\prime\in\cS}\sum_{a\in\cA} \omega^k_{mn}(s,a)\theta^k_{mn}(s^\prime|s,a)\nonumber\displaybreak[1]\\ 
&\qquad=\sum_{s^\prime\in\cS}\sum_{a\in\cA} \omega^k_{mn}(s^\prime,a)\theta^k_{mn}(s|s^\prime,a),~\forall s\in\cS, \nonumber\displaybreak[2]\\ 
&\text{Constraints \eqref{eq:LP-constraint1}, \eqref{eq:LP-constraint2}, and \eqref{eq:LP-constraint4}}. 
\end{align}
We denote the optimal solution to the above LP~(\ref{eq:episodic_om}) as $\{\omega^{*,k}_{mn}(s,a), \forall m,n,s,a\}$, using which \mmTS computes the \indexp in~(\ref{eq:mmindex}), and then constructs the \mmindex according to Algorithm~\ref{alg:FARI}. 
We denote the policy as $\pi^{*,k}$, and then execute this policy in this episode.  
We summarize this process in Algorithm \ref{alg:TS}.

\begin{algorithm}[t]
	\caption{\mmTS}
	\label{alg:TS}
	\begin{algorithmic}[1]
     \Require Prior distribution $\mu_{mn}^1, \forall m,n$;
     \State Initialize $C_{mn}^{1}(s,a)=0, \forall m,n,s,a$; $t=1, T_0=0, t_1= 1$; $\pi^{*,1}$ to be any policy;	
	\For{episodes $k=1,2,\cdots$}

	\While{$t\leq t_k + T_{k-1}$ and $C_{mn}^{t}(s,a)\leq 2 C_{mn}^{t_k}(s,a)$,  $\forall m,n,s,a$}
	\State Execute policy $\pi^{*,k}$ and observe new state $S_{mn}^{t+1}$;
	 \State Update $\mu_{mn}^{t+1}$ according to (\ref{eq:bayes});
	 	\State	$T_{k} \leftarrow t - t_{k}$, $t\leftarrow t+1$;
	 \EndWhile
	 	\State	$t_{k+1} \leftarrow t$;
	 	\State Sample $\theta_{mn}^{k+1}$ from $\mu_{mn}^{t_{k+1}}$, compute $\omega_{mn}^{*,k+1}$ via \eqref{eq:episodic_om};
	\State Construct the \mmindex $\pi^{*,k+1}$ according to Algorithm~\ref{alg:FARI} using $\omega_{mn}^{*,k+1}$.
		\EndFor
	\end{algorithmic}
\end{algorithm}

\begin{remark}
\mmTS leverages the low-complexity provably optimal \mmindex for making decisions, and hence only needs to solve a LP \eqref{eq:episodic_om} at each episode (in polynomial time \cite{karmarkar1984new,dunagan2004simple,kelner2006randomized}). This differentiates \mmTS from state of the arts, which are often computationally expensive.  For example,  \cite{ouyang2017learning} proposed a TS method for MDPs and the optimal policy is approximated via solving complicated Bellman equations.  {\cite{akbarzadeh2022learning} extended  \cite{ouyang2017learning} to \rb, however, the computation of Whittle index policy also relies on repeatedly solving Bellman equations. }
Another line of deep RL based approaches, e.g., \cite{sun2018cell,zhang2021q,dogan2021reinforcement,dinh2021deep} neither incorporate fairness constraint, nor have finite-time performance analysis.  
\end{remark}

\subsection{The Learning Regret}
We characterize the finite-time performance of \mmTS using the Bayesian regret. Specifically, the Bayesian regret of a learning policy $\pi$ is defined as  
\begin{align} \label{eq:regret}
    R(T) = \mathbb{E}_{\pi,\mu^1}\left[\sum_{t=1}^T\sum_{n=1}^N\sum_{m=1}^MS_{mn}^t-TJ(\pmb{\theta}^*)\right],
\end{align}
where  $J(\pmb{\theta}^*)$ is performance of \mmindex under the perfect knowledge of the true transition kernel $\pmb{\theta}^*:=\{\theta_{mn}^*, \forall m,n\}$; and the expectation is taken with respect to prior distributions $\mu^1:=\{\mu_{mn}^1, \forall m,n\}$ and policy $\pi.$ We follow \rb literature, e.g., \cite{akbarzadeh2022learning,xiong2022reinforcement} to define the Bayesian regret with respect to the \mmindex, which is asymptotically optimal.

\begin{assumption}\label{Assumption:1}
Let $J(\pmb{\theta}^k)$ be the average cost for \mmindex under $\pmb{\theta}^k:=\{\theta_{mn}^k, \forall m,n\}.$
For $\forall \pmb{\theta}^k$, $J(\pmb{\theta}^k)$ does not depends on the initial state and satisfies the average cost Bellman equation $\forall t\in[t_k, t_{k+1}]$:
 \begin{align}\label{eq:Bellman}
  \hspace{-0.2cm} C(\bS^t,\bA^t) =J(\pmb{\theta}^k)+ V_{\pmb{\theta}^k}(\bS^t) 
-\sum\limits_{\bS^\prime } \pmb{\theta}^k(\bS'|\bS^t,\bA^t)V_{\pmb{\theta}^k}(\bS'),
\end{align}
where $V_{\pmb{\theta}^k}(\bS^t)$ is the bias value function \cite{puterman1994markov} and unique up to a constant.
\end{assumption}
Assumption~\ref{Assumption:1} is standard in TS-based methods \cite{akbarzadeh2022learning, ouyang2017learning, bartlett2012regal}, which ensures that the average cost of \mmindex is well defined. 
The span of the bias value function $V$ under transition kernel $\pmb{\theta}^k$ is defined as \cite{bartlett2012regal}:
\begin{align}
    \text{Span}(V_{\pmb{\theta}^k}):=\max_{\bS\in \cS^{MN\times 1}}V_{\pmb{\theta}^k}(\bS)-\min_{\bS\in \cS^{MN\times 1}}V_{\pmb{\theta}^k}(\bS),
\end{align}
which is an essential factor for bounding the Bayesian regret of \mmTS. Under Assumption \ref{Assumption:1} and the span definition, our main result in this section is stated as follows:

\begin{theorem}\label{thm:regret}
The Bayesian regret of \mmTS satisfies
{\begin{align}
    R(T) = \mathcal{O}(S_{max}^3M^2N^2\sqrt{T\log T}).
\end{align}
}
\end{theorem}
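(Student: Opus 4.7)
The plan is to extend the episodic Thompson-sampling regret analysis of Ouyang \emph{et al.} (TSDE) and its restless-bandit counterpart of Akbarzadeh--Mahajan to our structured setting, taking advantage of the fact that \mmTS re-samples the posterior and rebuilds \mmindex only once per episode, so the analysis reduces to controlling the number of episodes and the per-step model mismatch accumulated within each. The first step is to apply Assumption~\ref{Assumption:1} by substituting the sampled-kernel Bellman equation~\eqref{eq:Bellman} into the cumulative cost $\sum_t\sum_{m,n}S_{mn}^t$, telescoping the bias terms $V_{\pmb{\theta}^k}(\bS^t)-V_{\pmb{\theta}^k}(\bS^{t+1})$ inside each episode, and absorbing the martingale difference between $V_{\pmb{\theta}^k}(\bS^{t+1})$ and its $\pmb{\theta}^*$-conditional expectation into its zero mean. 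This produces the usual three-term decomposition $R(T)=R_0+R_1+R_2$, with $R_0=\mathbb{E}\bigl[\sum_k T_k(J(\pmb{\theta}^k)-J(\pmb{\theta}^*))\bigr]$, $R_1=\mathbb{E}\bigl[\sum_k(V_{\pmb{\theta}^k}(\bS^{t_k})-V_{\pmb{\theta}^k}(\bS^{t_{k+1}}))\bigr]$, and $R_2$ the model-mismatch term $\mathbb{E}\bigl[\sum_{k,t}((\pmb{\theta}^*-\pmb{\theta}^k)V_{\pmb{\theta}^k})(\bS^t,\bA^t)\bigr]$.

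Next I would bound the three pieces separately. For $R_0$, the posterior-sampling identity $\mathbb{E}[f(\pmb{\theta}^k)\mid\mathcal{H}_{t_k}]=\mathbb{E}[f(\pmb{\theta}^*)\mid\mathcal{H}_{t_k}]$ applies only to $\mathcal{H}_{t_k}$-measurable $f$, and $T_k$ fails this; I would fix it with Ouyang's stopping-time trick, noting that the first stopping rule of Algorithm~\ref{alg:TS} forces $T_k\le T_{k-1}+1$, so I can replace $T_k$ by the measurable $T_{k-1}+1$ and collapse $R_0$ to at most $\mathbb{E}[K_T]\cdot\|J\|_\infty$. Term $R_1$ telescopes directly to $\mathbb{E}[K_T]\cdot\max_k\mathrm{Span}(V_{\pmb{\theta}^k})$. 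A standard pigeonhole over the two termination triggers (visit-doubling and length-doubling) bounds the episode count by $K_T=\tilde{\mathcal{O}}(\sqrt{MN S_{\max}|\cA|\, T})$, so $R_0$ and $R_1$ both scale as $\tilde{\mathcal{O}}(\sqrt{T})$ times polynomial Span and problem-size factors.

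The dominant term is $R_2$. Exploiting the product form $\pmb{\theta}(\bS'|\bS,\bA)=\prod_{m,n}\theta_{mn}(S'_{mn}|S_{mn},A_{mn})$, I would bound $|((\pmb{\theta}^*-\pmb{\theta}^k)V_{\pmb{\theta}^k})(\bS,\bA)|\le \mathrm{Span}(V_{\pmb{\theta}^k})\sum_{m,n}\|\theta^*_{mn}(\cdot|S_{mn},A_{mn})-\theta^k_{mn}(\cdot|S_{mn},A_{mn})\|_1$, then apply Weissman's $L_1$ concentration to $\theta^*_{mn}$ around the empirical estimate $\hat\theta_{mn}$ and a matching posterior-concentration inequality to $\theta^k_{mn}$ to obtain $\|\theta^*_{mn}-\theta^k_{mn}\|_1=\tilde{\mathcal{O}}(\sqrt{S_{\max}/C_{mn}^{t_k}(s,a)})$ with high probability. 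The Cauchy--Schwarz/pigeonhole identity $\sum_t\mathbb{1}[(s,a)\text{ visited}]/\sqrt{C_{mn}^{t_k}(s,a)}\le\tilde{\mathcal{O}}(\sqrt{S_{\max}|\cA|\,T})$, summed over the $MN$ arms, gives $R_2=\tilde{\mathcal{O}}(\mathrm{Span}(V)\cdot MN\cdot S_{\max}\sqrt{T\log T})$.

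The step I expect to be the main obstacle is bounding $\mathrm{Span}(V_{\pmb{\theta}^k})$ polynomially in the problem parameters rather than exponentially in $MN$, since the joint state space is $\cS^{MN}$. To avoid this blow-up I would use the inequality $\mathrm{Span}\le \mathrm{diameter}\times \max\text{-cost}$ and show that the joint chain under \mmindex has diameter $\mathcal{O}(S_{\max})$: because the per-arm birth--death queues on $\{0,\ldots,S_{\max}\}$ evolve independently given the action, and the index policy together with the per-AP capacity constraint~\eqref{eq:association-condition2} still allows every queue to drain in $\mathcal{O}(S_{\max})$ expected steps, the diameter is polynomial. Combined with the per-step cost bound $\sum_{m,n}S_{mn}^t\le MNS_{\max}$, this gives $\mathrm{Span}(V_{\pmb{\theta}^k})=\mathcal{O}(MNS_{\max}^2)$. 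Substituting into the $R_2$ estimate yields $R(T)=\tilde{\mathcal{O}}(M^2N^2 S_{\max}^3\sqrt{T\log T})$, which dominates $R_0+R_1$ and matches the claimed rate. Making the diameter bound hold uniformly over the posterior support of $\pmb{\theta}^k$ is where most of the technical work will concentrate; I would handle it via a Lyapunov-drift argument on the total queue length under the induced index policy, possibly combined with a direct comparison to the LP-relaxed policy whose asymptotic optimality is established in Theorem~\ref{thm:asym_opt}.
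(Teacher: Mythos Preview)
Your proposal is correct and follows essentially the same TSDE-style episodic decomposition and per-term analysis as the paper: the three terms $R_0,R_1,R_2$ in your notation correspond exactly to the paper's $R_1,R_2,R_3$, the episode-count bound $K_T=\tilde{\mathcal{O}}(\sqrt{S_{\max}MNT})$ is identical, and the treatment of the model-mismatch term via the product-kernel factorization, Weissman-type $L_1$ concentration, and the visit-count pigeonhole is the same route the paper takes.

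The one substantive methodological difference is how $\mathrm{Span}(V_{\pmb{\theta}^k})$ is bounded. You propose the diameter$\times$max-cost route and a Lyapunov-drift argument to control the diameter of the joint chain under \mmindex uniformly over the posterior support. The paper (Lemma~\ref{lemma:span}) avoids diameter entirely: it fixes the additive constant in the bias by imposing $\eta_{\pmb{\theta}^k}^{\intercal}V_{\pmb{\theta}^k}=J(\pmb{\theta}^k)-S_{\max}MN$, writes $V_{\pmb{\theta}^k}(\bS)$ as the infinite sum $\sum_{t\ge 0}\mathbb{E}[C(\bS^{t+1},\bA^{t+1})-S_{\max}MN\mid \bS^0=\bS]$, observes that every summand is nonpositive so $V_{\pmb{\theta}^k}\le 0$, and then lower-bounds the sum by considering the extremal kernel whose stationary distribution sits at the all-$S_{\max}$ state, obtaining $-(S_{\max}^2+S_{\max})MN/2\le V_{\pmb{\theta}^k}(\bS)\le 0$ directly. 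Both arguments land on the same $\mathcal{O}(S_{\max}^2MN)$ order, but the paper's bias-sum computation is shorter and automatically uniform in $\pmb{\theta}^k$, whereas your diameter approach buys a more transparent probabilistic interpretation at the cost of the extra uniformity work you correctly flagged as the main obstacle.
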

The high level idea of the proof is similar to \cite{ouyang2017learning, akbarzadeh2022learning}, but we provide an explicit upper bound on the span with respect to the dimension of state space $S_{max}$, the number of APs $N$, and the number of users $M$, by leveraging the structure encoded in our \orbf. This is one of main contributions in this work. For ease of readability, we present a proof outline below, and relegate the details to Appendix~\ref{sec:app}.

\subsection{Proof Sketch of Theorem~\ref{thm:regret}}\label{sec:proofsketch-regret}
\textbf{Regret Decomposition.} Let $K_T$ be the number of episodes until time horizon $T.$  Given the average cost Bellman equation \eqref{eq:Bellman}, the Bayesian regret \eqref{eq:regret} can be decomposed as 
\begin{align*}
&R(T)
    = \underset{\text{$R_1$: regret due to \textit{Bayesian estimating error}}}{\underbrace{\mathbb{E} \Bigg[ \sum_{k=1}^{K_T}  T_k J(\pmb{\theta}^k) \Bigg]- T \mathbb{E}[J(\pmb{\theta}^*)]}}\displaybreak[0]\\
    &\qquad\qquad   + \underset{\text{$R_2$: regret due to \emph{time-varying policies between episodes}}}{\underbrace{\mathbb{E} \Bigg[ \sum_{k=1}^{K_T} \sum_{t=t_k}^{t_{k+1}-1}[V_{\pmb{\theta}^k}(\bS^t) - V_{\pmb{\theta}^k}(\bS^{t+1})]\Bigg]}}\displaybreak[1]\\
    &+\underset{\text{$R_3$: regret due to \emph{model mismatch}}}{\underbrace{\mathbb{E} \Bigg[ \sum_{k=1}^{K_T} \sum_{t=t_k}^{t_{k+1}-1}[ V_{\pmb{\theta}^k}(\bS^{t+1}) -\sum_{\bS^\prime } \pmb{\theta}^k(\bS^\prime|\bS^t,\bA^t)V_{\pmb{\theta}^k}(\bS^\prime) ]\Bigg]}}.
\end{align*}
We then proceed to derive bounds on $K_T$ and $R_1, R_2, R_3.$

\textbf{Bounding $K_T$:} Since we consider dynamic episodes as inspired by \cite{ouyang2017learning}, the number of episodes $K_T$ can be upper bounded by $2\sqrt{S_{max}MNT\log{T}}$.

\textbf{Bounding $R_1$:} Following the monotone convergence theorem, 
   $R_1\leq \sum_{k=1}^{\infty} \mathbb{E}[  \mathds{1}_{t_k\leq T}  (T_{k-1}+1) J(\pmb{\theta}^k) ]- T \mathbb{E}[J(\pmb{\theta}^*)]= \mathbb{E}[\sum_{k=1}^{K_T}  (T_{k-1}+1) J(\pmb{\theta}^*)] -T \mathbb{E}[J(\pmb{\theta}^*)]\leq \mathbb{E}[K_T].$

\textbf{Bounding $R_2$:} Similar to \cite{ouyang2017learning}, $R_2$ can be upper bounded by the value of span and $K_T$ as $\mathbb{E}[\text{Span}(V)K_T].$

\textbf{Bounding $R_3$:} $R_3$ is the regret part due to \emph{model mismatch}, which is one key  contribution of our proof compared to existing results \cite{ouyang2017learning,akbarzadeh2022learning}.  We first construct a confidence ball in each episode which characterizes the distance between the true transition kernel and the sampled transition kernel.  We show that with high probability the true transition kernel lies in the confidence ball and the complementary event is a rare event.  Bounding these two regrets leads to the bound on $R_3\leq \text{Span}(V)S_{max}MN\sqrt{T\log T}$.  

\textbf{Bounding $\text{Span}(V)$:} The value of span plays a critical role in the Bayesian regret analysis, which is characterized in Lemma~\ref{lemma:span}. This is another key contribution in this work. 
 \begin{lemma}\label{lemma:span}
The span of $V$ under $\pmb{\theta}^k, \forall k$ is upper bounded by
\begin{align*}
    \text{Span}(V):=\text{Span}(V_{\pmb{\theta}^k})\leq {(S_{max}^2+S_{max})MN}/{2}.
\end{align*}
\end{lemma}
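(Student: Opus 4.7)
The target bound $(S_{max}^2+S_{max})MN/2 = MN \cdot S_{max}(S_{max}+1)/2$ decomposes as the sum over all $MN$ (user, AP) queues of the triangular number $1+2+\cdots+S_{max}$. This structure strongly suggests the quadratic ansatz $\Phi(\bS) := \sum_{m,n} S_{mn}(S_{mn}+1)/2$, which equals the accumulated queue-length cost of draining each queue one packet at a time. My plan is to show $V_{\pmb{\theta}^k}(\bS) \leq \Phi(\bS)$ pointwise (after normalizing $V_{\pmb{\theta}^k}(\mathbf{0}) = 0 = \Phi(\mathbf{0})$, which is WLOG since the bias is defined only up to an additive constant) by a Lyapunov super-solution argument, and then read off $\text{Span}(V_{\pmb{\theta}^k}) \leq \Phi(\bS^*) - \Phi(\mathbf{0}) = (S_{max}^2+S_{max})MN/2$ at the maximal configuration $\bS^* := (S_{max},\ldots,S_{max})$.

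The super-solution step amounts to verifying the drift inequality
\begin{align*}
\Phi(\bS) - \sum_{\bS'} \pmb{\theta}^k(\bS' \mid \bS, \pi^*(\bS))\, \Phi(\bS') \;\geq\; C(\bS) - J(\pmb{\theta}^k)
\end{align*}
for every $\bS$, where $\pi^*$ denotes \mmindex and $C(\bS) = \sum_{m,n} S_{mn}$. Iterating this inequality along the sample path and invoking the ergodicity guaranteed by Assumption~\ref{Assumption:1} yields $V_{\pmb{\theta}^k}(\bS) \leq \Phi(\bS)$ pointwise via the standard average-cost super-solution argument. The inequality can in turn be checked queue-by-queue: for a single queue at level $s$ with arrival indicator $X$ and service count $D$, a direct expansion gives on the event $s + X \geq D$ that
\begin{align*}
\tfrac{s(s+1)}{2} - \mathbb{E}\bigl[\tfrac{s'(s'+1)}{2}\bigr] = s\,\mathbb{E}[D - X] - \tfrac{1}{2}\mathbb{E}\bigl[(X - D)^2 + (X - D)\bigr],
\end{align*}
whose leading term $s\,\mathbb{E}[D - X]$ dominates $C_{mn}(s) = s$ whenever the stabilizing condition $\mathbb{E}[D - X] \geq 1$ holds, while the residual constant aggregates across queues to at most $O(MN)$ and is matched by $J(\pmb{\theta}^k)$ in stationarity.

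The step I expect to be the main obstacle is handling two complications. First, the boundary regime $s < D - X$: the reflection $s' = (s+X-D)^+$ breaks the algebraic identity above, but the effect is one-sided and only makes the left-hand side larger, so a conservative bound survives. Second, and more subtly, the fairness constraint~\eqref{eq:association-condition2} in \mmindex couples actions across queues and prevents the transition kernel from factoring cleanly over $(m,n)$. I plan to resolve this by leveraging the LP-based construction of the \indexp: by Theorem~\ref{thm:asym_opt}, \mmindex asymptotically realizes the decoupled relaxed optimum, and in particular the fairness routing only \emph{withholds} arrivals from queues that cannot be served, which can only strengthen the per-queue drift relative to the decoupled benchmark. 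Summing the verified single-queue inequalities across all $MN$ user-AP pairs then completes the bound on $\text{Span}(V_{\pmb{\theta}^k})$.
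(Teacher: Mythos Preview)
Your Lyapunov route is different from the paper's and contains a genuine gap. The drift inequality $\Phi(\bS) - \mathbb{E}_{\pmb{\theta}^k}[\Phi(\bS')] \geq C(\bS) - J(\pmb{\theta}^k)$ rests on the per-queue condition $\mathbb{E}[D-X] \geq 1$, but the lemma must hold uniformly over \emph{every} sampled kernel $\pmb{\theta}^k$ satisfying Assumption~\ref{Assumption:1}, with no stability hypothesis available. Consider the kernel under which each queue deterministically increments by one per slot until saturating at $S_{max}$ and never drains; this is unichain, so Assumption~\ref{Assumption:1} holds with $J(\pmb{\theta}^k) = S_{max}MN$. For a single queue at level $s < S_{max}$ the one-step drift of your $\Phi$ is $\tfrac{s(s+1)}{2} - \tfrac{(s+1)(s+2)}{2} = -(s+1)$, while the right-hand side is $s - S_{max}$; the required inequality $-(s+1) \geq s - S_{max}$ fails once $s > (S_{max}-1)/2$. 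Thus your super-solution step breaks precisely on the kernels that saturate the claimed span, and neither the appeal to Theorem~\ref{thm:asym_opt} nor the ``withheld arrivals'' heuristic can repair this, since those concern the index policy under the \emph{true} dynamics, not drift control under an arbitrary sampled $\pmb{\theta}^k$. A secondary issue: even granting the super-solution, you obtain only $\max_{\bS} V(\bS) \leq \Phi(\bS^*)$ after setting $V(\mathbf{0})=0$; the span also needs $\min_{\bS} V(\bS) = V(\mathbf{0})$, which you have not argued and which is not automatic for general $\pmb{\theta}^k$.

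The paper avoids both difficulties by working directly with the bias series $V_{\pmb{\theta}^k}(\bS) = \sum_{t\geq 0}\mathbb{E}\bigl[C(\bS^{t+1}) - S_{max}MN \,\big|\, \bS^0 = \bS\bigr]$ under a suitable additive normalization: every summand is nonpositive because $C(\cdot) \leq S_{max}MN$ pointwise, giving $V_{\pmb{\theta}^k} \leq 0$ with no drift condition whatsoever, and the lower bound $V_{\pmb{\theta}^k} \geq -(S_{max}^2+S_{max})MN/2$ is then read off on the extremal kernel described above, where the per-step deficit is $(t-S_{max})MN$ for $0\leq t\leq S_{max}$ and the triangular sum $\sum_{t=0}^{S_{max}}(t-S_{max})$ appears directly.
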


\begin{remark}
\mmTS achieves a sub-linear Bayesian regret $\tilde{\mathcal{O}}(\sqrt{T\log T})$ as state-of-the-art TS-based methods \cite{ouyang2017learning} for MDPs and \cite{akbarzadeh2022learning} for \rb.   Different from them, we provide an explicit upper bound on the span of the bias value function in Lemma~\ref{lemma:span}.  In contrast, the span is assumed to be upper bounded by a constant in \cite{ouyang2017learning}, and the bound in \cite{akbarzadeh2022learning} relies on an ``ergodicity coefficient", which is a unknown parameter and varies across different MDP realizations.  This assumption is not necessary in our analysis since we leverage the underlying structure in \orbf to bound the span. 
\end{remark}

\section{Experiments}\label{sec:exp}
In this section, we numerically evaluate the performance of our proposed \mmindex and \mmTS using both real traces collected from a 60GHz mmWave testbed and synthetic traces.

 \begin{figure*}[t]
 \centering
 \begin{minipage}{.33\textwidth}
 \centering
 \includegraphics[width=1\columnwidth]{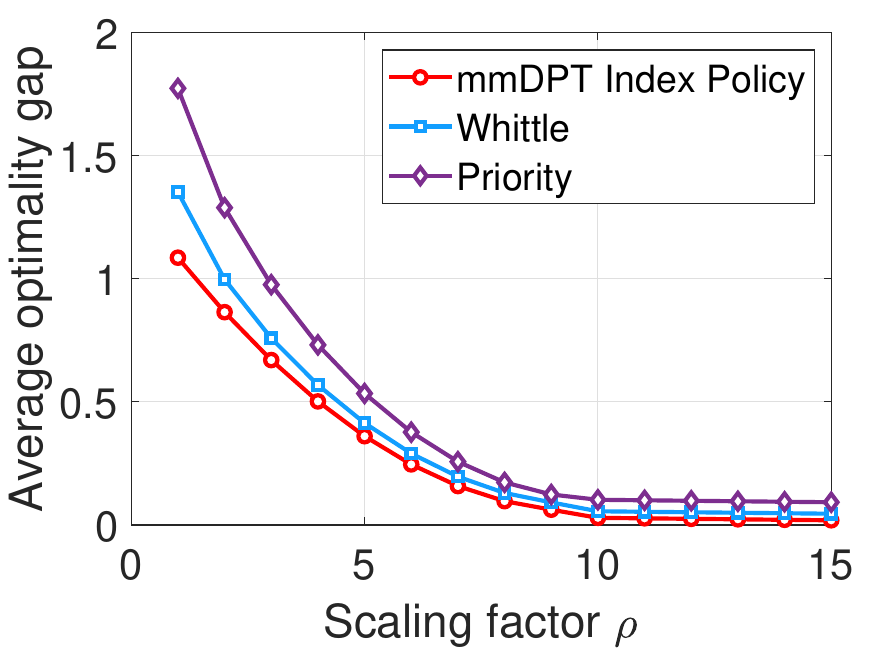}
 \vspace{-0.1in}
 \caption{Asymptotic optimality: 60GHz mmWave testbed.}
  \label{fig:optimality-real}
 \end{minipage}\hfill
 \begin{minipage}{.33\textwidth}
 \centering
 \includegraphics[width=1\columnwidth]{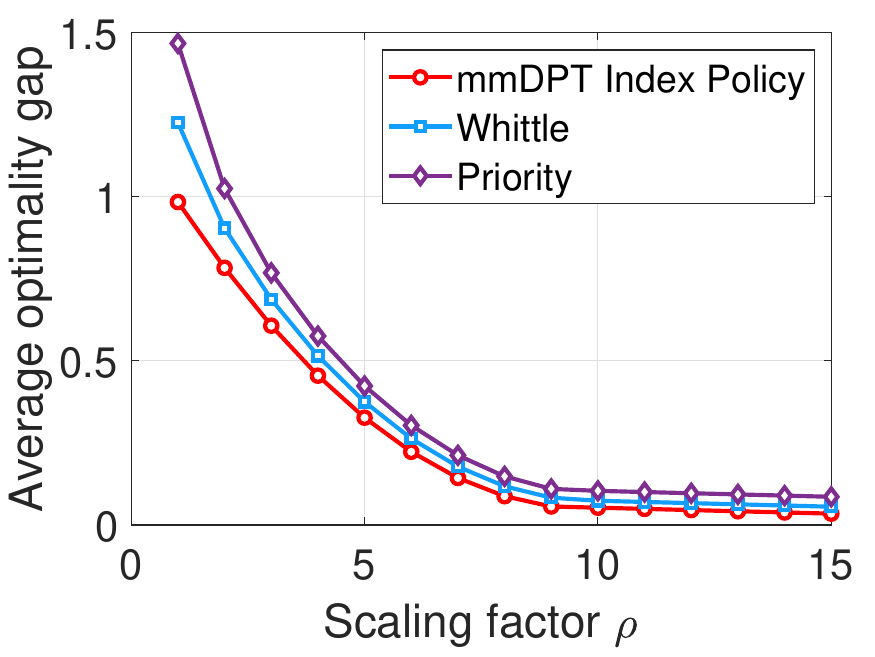}
 \vspace{-0.1in}
 \caption{Asymptotic optimality: Synthetic data traces.}
  \label{fig:optimality-sync}
 \end{minipage}\hfill
 \begin{minipage}{.33\textwidth}
 \centering
 \includegraphics[width=1\columnwidth]{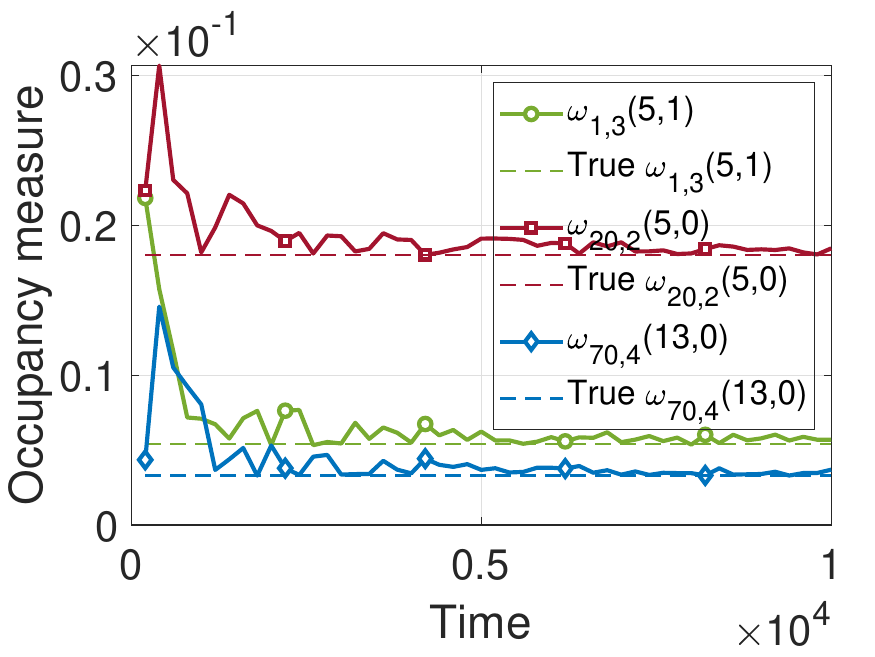}
 \vspace{-0.1in}
\caption{Global attractor: Synthetic data traces.}
	\label{fig:globalattractor}
 \end{minipage}
 \vspace{-0.1in}
 \end{figure*}

 \begin{table}[t]
\tiny{
\begin{center}
\begin{tabular}{|p{\dimexpr.10\linewidth-2\tabcolsep-1.3333\arrayrulewidth}||c|c|c|c|} 
\hline
& \multicolumn{4}{c|}{AP} \\
\hline
User & 1 & 2 & 3 & 4 \\  
\hline
\hline
 1 & 0.09,0.07,0.03,0.01 & 0.095,0.075,0.025,0.005 & 0.08,0.06,0.04,0.02 & 0.085,0.065,0.045,0.025 \\ 
 \hline
 21 & 0.08,0.07,0.06,0.05 & 0.085,0.075,0.065,0.055 & 0.07,0.06,0.05,0.04 & 0.075,0.065,0.055,0.045 \\
 \hline
 41 & 0.07,0.06,0.05,0.04 & 0.075,0.065,0.055,0.045 & 0.06,0.05,0.04,0.03 & 0.065,0.055,0.045,0.035  \\
 \hline
 61 & 0.06,0.05,0.04,0.03 & 0.065,0.055,0.045,0.035 & 0.05,0.04,0.03,0.02 & 0.055,0.045,0.035,0.025 \\
 \hline
 81 & 0.05,0.04,0.03,0.02 & 0.055,0.045,0.035,0.025 & 0.04,0.03,0.02,0.01 & 0.045,0.035,0.025,0.015 \\
 \hline
\end{tabular}
\end{center}
\vspace{-0.1in}
\caption{The probability of successfully delivering $1, 2, 3, 4$ packets over frames in synthetic traces for some users.}
\label{table:successp}
\vspace{-0.1in}
}
\end{table}

\subsection{Evaluation Setup} 

\textbf{60GHz mmWave Testbed.}
Since commodity off-the-shelf (COTS) 802.11ad devices can generate the desirable data traces for our simulations, we build a 60GHz mmWave communication testbed using software-defined radio (SDR) devices. Specifically, the testbed is emulation based and measured repeatedly through one transmitter and one receiver, both of which are built using one computer, one ADI EVAL-HMC6300 Board (60GHz RF Frontend), one USRP X310, and one planar antenna. The planar antenna has 4$\times$8 patch elements for beam steering to compensate the high path loss of mmWave signal propagation.  We implement a simplified version of IEEE 802.11ad protocol \cite{IEEE802p11ad} on the testbed for data packet transmission.  The instantaneous bandwidth of signal transmission is 100MHz.  The FFT size of OFDM symbols is 512, and the modulation scheme is QPSK.  All signal processing modules are implemented in the computer using C++.  Figure~\ref{fig:testbed}(c) shows \textit{a snapshot of received/decoded signal constellation at the receiver} using the testbed in Figure~\ref{fig:testbed}(a).

We consider a dense, cell-free mmWave network in a conference room, which consists of 4 APs and 30 user devices as shown in Figure~\ref{fig:testbed}(a) with an example snapshot in Figure~\ref{fig:example}.  The APs are placed on two-side walls, while 30 user devices are uniformly distributed over the whole conference room.  Using this mmWave testbed, we conduct real-time data packet transmissions from each AP to each user devices to collect data traces for our simulation.  We measure the error vector magnitude (EVM) of the decoded signal constellations at the user device (receiver) for every packet (0.128ms).  A total of 468,750 EVM samples are recorded over 60 seconds for each AP-user pair. Figure~\ref{fig:evm} shows three instances of EVM traces, where we draw 1,000 samples for each instance out of the total samples with a step-size of 468 for ease of illustration.  To the end, we collect $4 \times 30$ EVM traces for those 4 APs and 30 user devices.  The measured EVM samples are used to infer their corresponding packet error rate (PER) based on the selected modulation and coding scheme (MCS) and other parameters specified in the 802.11ad standard \cite{IEEE802p11ad}.
  
Specifically, the simulation time is divided into 60 frames, each of which consists of EVM samples recorded in 1 second.  For simplicity, we use EVMs in each framework to infer PER and further the transition probabilities in Section~\ref{sec:model} based on \cite{da2018analysis}. These inferred values over time are used as the input of our simulation to evaluate our proposed algorithms.

\textbf{Synthetic Traces.} We simulate a dense mmWave network with $4$ APs and $100$ users. The request arrival probability is drawn from a Poisson process. The mass function is defined as f(1000; 2000), and normalized with an average of
0.5.  To model the dynamic nature of mmWave channels, we divide the simulation time into frames, each of which consists of $10^4$ time slots, and the distribution over the number of packets successfully delivered from an AP to one user (defined in~(\ref{eq:channel})) is fixed in one frame but varies across frames.  We assume that at most $d=4$ packets can be transmitted and the corresponding distribution of selected user $1,21,41,61,81$ is presented in Table~\ref{table:successp}. The distribution of user with index number between them are arithmetic sequences. For example, $(0.09,0.07,0.03,0.01)$ corresponds to the probabilities of $(1,2,3,4)$ packets transmitted from AP $1$ to user $1$ in frame $1$, respectively. Hence the remaining $0.8$ probability corresponds to no packet delivery. 
We set the maximum queue size as $S_{\text{max}}=15$ and the fairness constraint as $B=20$.

\textbf{Baselines.}  We compare our \mmindex with (a) Whittle index based (Whittle) \cite{singh2022user}; and (b) priority index based (Priority) \cite{sun2018cell}.   We note that none of these policies can be directly applied to Problem \mmua~(\ref{eq:original-mdp}) since their problems were cast as a \rb without the fairness constraint~(\ref{eq:association-condition2}).  To this end, we augment them with the sorting step as in the design of \mmindex (see Section~\ref{sec:mmindexpolicy}), and refer to the resulting algorithms as Whittle and Priority, respectively. 

Correspondingly, when the system dynamics are unknown, we compare our \mmTS with (a) a TS method \cite{akbarzadeh2022learning} to learn the above Whittle policy (TS-Whittle); (b) the above priority index enabled learning policy (IDEA) for packet scheduling in mmWave networks \cite{sun2018cell}; (c) Deep Q-network (DQN) based packet scheduling policy \cite{dinh2021deep}; and (d) soft actor-critic (SAC) based scheduling policy \cite{dogan2021reinforcement}. Again, the design of these (deep) RL based scheduling policies did not incorporate the fairness constraint~(\ref{eq:association-condition2}). For  sake of fair comparison, we augment them in the same manner as aforementioned, and the prior $\mu_{mn}^t$ is sampled from a Dirichlet distribution for \mmTS.

 \subsection{Evaluation Results}

\textbf{Asymptotic Optimality.} We first validate the asymptotic optimality of \mmindex (see Theorem~\ref{thm:asym_opt}).  We compare the cumulative cost (measured in average delay) suffered by all users under different policies, with that obtained from the theoretical lower bound obtained via solving the LP~(\ref{eq:LP})-(\ref{eq:LP-constraint4}). We call this difference \textit{the optimality gap}.  The average optimality gap, which is the ratio of the optimality gap and the scaling parameter $\rho$ (see Section~\ref{sec:asymptotic}), is presented in Figure~\ref{fig:optimality-real} using the real traces collected from our mmWave testbed and Figure~\ref{fig:optimality-sync} using synthetic traces.    We observe that the average optimality gap decreases significantly and closes to zero as $\rho$ increases.  This verifies the asymptotic optimality in Theorem~\ref{thm:asym_opt}.  An interesting observation is that though Whittle and Priority scheduling policies did not incorporate the fairness constraint, when augmented with our proposed sorting step as aforementioned, their asymptotic performance can be guaranteed.  This further validates the independent interest of our proposed framework for designing index policies.

\textbf{Global Attractor.} The asymptotic optimality of \mmindex is under the definition of global attractor. For ease of illustration, we randomly pick three state-action pairs: $(5,1)$ for requests from user 1 at AP 3, $(5,0)$ for requests from user 20 at AP 2, and $(13,0)$ for requests from user 70 at AP 4, all in frame 1 with $10^4$ time slots. As shown in Figures~\ref{fig:globalattractor}, the occupancy measure of requests from user 1 at AP 3 for state-action pair $(5,1)$ indeed converges using synthetic traces.  Similar observations can be made for the other two cases.  Therefore, the convergence indeed occurs for \mmindex and hence we verify the global attractor condition.  

\textbf{Learning Regret and Running Time.} The learning regret of \mmTS and other baselines under real and synthetic traces in any particular frame are shown in Figure~\ref{fig:regret}(a) and (b), respectively, where we use the Monte Carlo simulation with 2,000 independent trails of a single-threaded program on Ryzen 7 7800X3D desktop with 32 GB RAM.  We observe that \mmTS consistently achieves a much smaller regret compared to other baselines. 
The corresponding running time is shown in Figure~\ref{fig:computation}(a) and (b), respectively, where the error bars are drawn based on the standard deviation.  Note that although TS-Whittle is also an index-aware TS based method, there is often no explicit expression for its intrinsic index policy, i.e., the Whittle index policy, {which is often computed through numeral methods  \cite{singh2022user, borkar2022whittle}. }In particular, we use value iteration to compute the Whittle index for TS-Whittle in our experiments. 
  We observe that the running time of \mmTS is similar to that of SAC and outperforms all others.  However, SAC has a much larger regret than \mmTS as shown in Figure~\ref{fig:regret}. These observations are consistent with our motivation that existing learning polices either do not incorporate fairness constraint and hence cannot be directly applied to \mmua, or do not have a finite-time (regret) performance guarantee, or are computationally expensive, while our \mmTS achieves all at once.

  \begin{figure}[t]
 \centering
 \begin{minipage}{.24\textwidth}
 \centering
 \includegraphics[width=1\columnwidth]{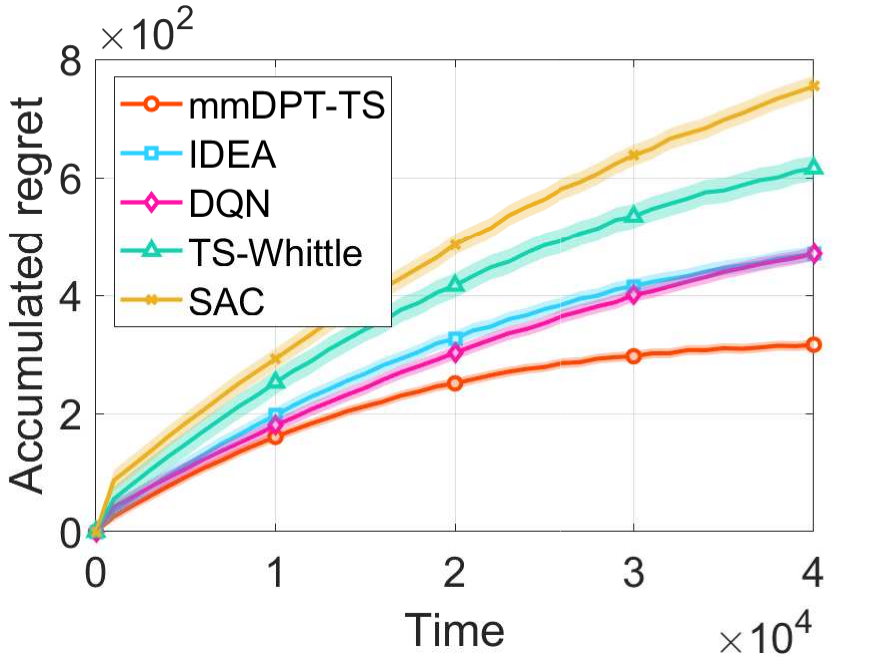}
 \vspace{-0.1in}
 \subcaption{60GHz mmWave testbed.}
  \label{fig:emu1epi}
 \end{minipage}\hfill
 \begin{minipage}{.24\textwidth}
 \centering
 \includegraphics[width=1\columnwidth]{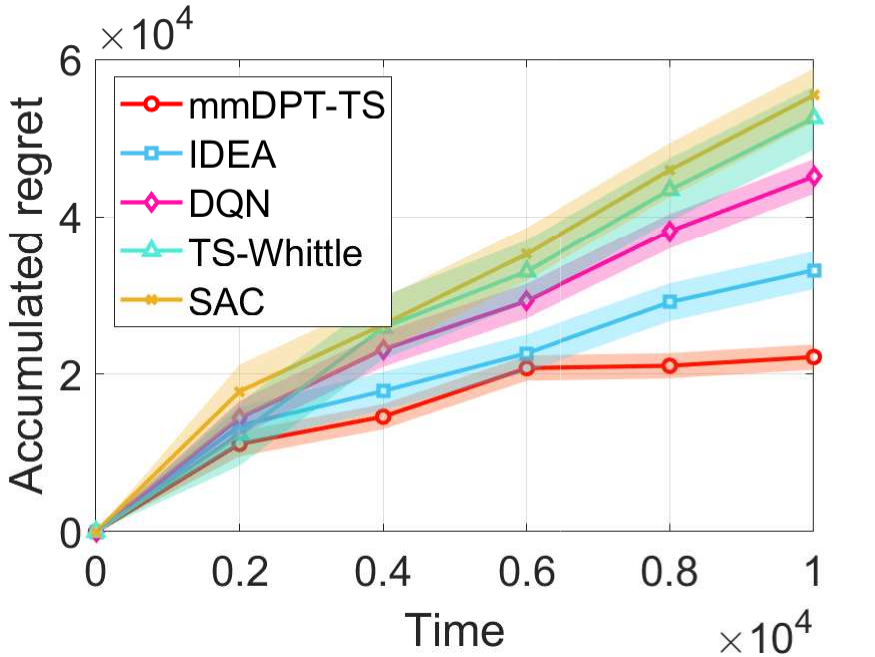}
 \vspace{-0.1in}
 \subcaption{Synthetic data traces.}
  \label{fig:syn1epi}
 \end{minipage}
  \caption{Accumulated regret.}
  \label{fig:regret}
 \vspace{-0.1in}
 \end{figure}

\begin{figure}[t]
 \centering
 \begin{minipage}{.24\textwidth}
 \centering
 \includegraphics[width=1\columnwidth]{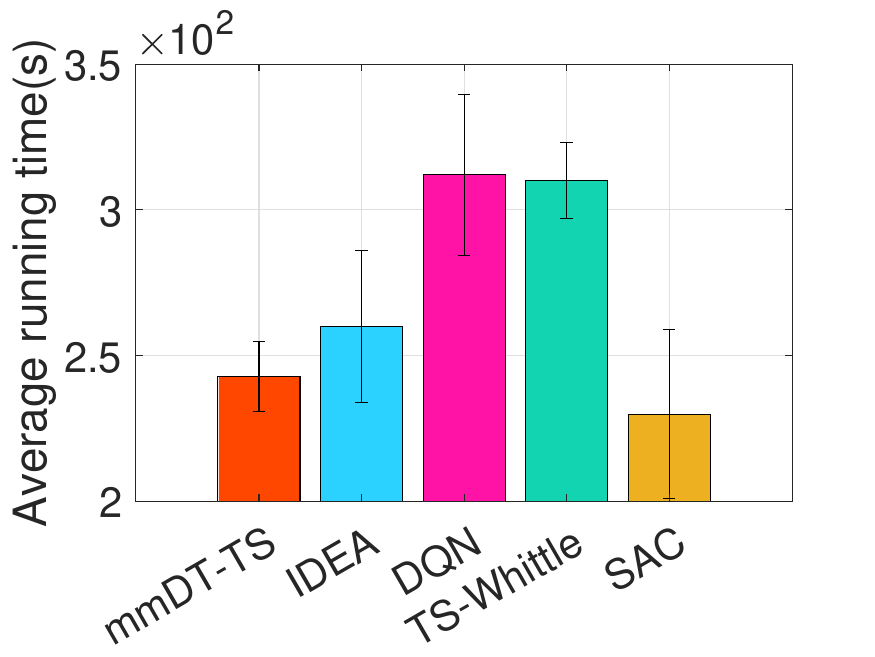}
 \vspace{-0.1in}
 \subcaption{60GHz mmWave testbed.}
  \label{fig:runningtimeemu}
 \end{minipage}\hfill
 \begin{minipage}{.24\textwidth}
 \centering
 \includegraphics[width=1\columnwidth]{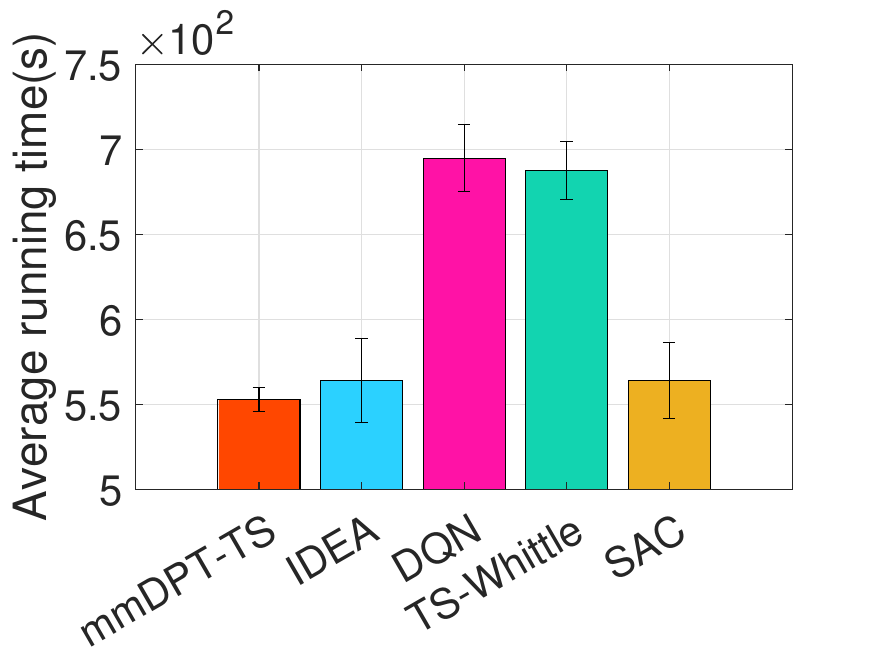}
 \vspace{-0.1in}
 \subcaption{Synthetic data traces.}
  \label{fig:runningtimesyn}
 \end{minipage}

  \caption{Average running time.}
  \label{fig:computation}
 \vspace{-0.1in}
 \end{figure}

\section{Conclusion}\label{sec:conclusion}

We studied the data packet transmission problem (\mmua) in a dense, cell-free mmWave network  to minimize the average delay experienced by all users in the system. 
We proposed a low-complexity structured RL solution \mmTS for \mmua by exploiting the inherent problem structure.  
We proved that \mmTS achieved a sub-linear Bayesian regret.  Experimental results based on the data collected from realistic mmWave networks corroborate our theoretical analysis. 

\section{Appendix}\label{sec:app}

\subsection{Proof of Theorem \ref{thm:asym_opt}}
Since $\lim_{\rho\rightarrow\infty}V_{\pi^*}^\rho-V_{\pi^{opt}}^\rho$ is non-negative, $\forall \pi^*$ from Algorithm~\ref{alg:FARI}, 
we only need to show that it is non-positive. 
{Let $X_{mn}^{\rho}(\pi^*, s,a)$ be the average number of  class-$m$ users with state at class-$n$ APs being $s$ with action $a$ taken  under $\pi^*$}, which is the global attractor based on Definition \ref{assump:global_attractor}. The key is then to show \cite{verloop2016asymptotically}
$$\lim_{\rho\rightarrow \infty} {X_{mn}^{\rho}(\pi^*, s,a)}/{\rho}=\omega_{mn}(s,a), \forall n,m.$$  
 
We denote $A_{mn}^{\pi^*}(s)$ as the set of all combinations $(m^\prime, j), m^\prime\in\cN, j\in\cS$ such that class-$m^\prime$ users with state at class-$n$ APs being $j$ have larger indices than those of class-$m$ users with state at class-$n$ APs being $s$ under the \indexp index policy $\pi^*$.  The transition rates of the process $X_{mn}^{\rho}(\pi^*,t)/\rho$ are then defined as
\begin{align}\label{eq:fluid_transition}
     x\rightarrow x-\frac{e_{mn,s}}{\rho}+\frac{e_{mn,s^\prime}}{\rho}
\end{align}
at rate $\sum_{a}{P_{mn}(s^\prime| s,a)}x_{mn}^\rho(s,a), $
where $x_{mn}^\rho(s,1)=\min\left(\rho B-\sum_{(m^\prime,j)\in A_{mn}^{\pi^*}(s)}x^\rho_{m^\prime n}(j,1), 0\right)$ {and $e_{mn,s}\in\mathbb{R}^{S\times 1}$ is unit vector with the $s$-th position being $1$.}
It follows from \cite{gast2010mean} that there exists a continuous function $ f_\ell(x)$ to model the transition rate of the process $X_{mn}^\rho(\pi^*;t)$ from state $x$ to $x+\ell/\rho, \forall \ell\in \mathcal{L}$ according to \eqref{eq:fluid_transition}, with $\mathcal{L}$ being the set composed of a finite number of vectors in $\mathbb{N}^{SN}$. Hence, the process $X_{mn}^\rho(\pi^*;t)/\rho$ is a density dependent population processes as in \cite{gast2010mean, verloop2016asymptotically}. Note that the process $X_{mn}^\rho(\pi^*;t)$
can be expressed as
\begin{align*}
    {d X_{mn}^\rho(\pi^*;t)}/{dt}=F(X_{mn}^\rho(\pi^*;t)),
\end{align*}
with $F(\cdot)$ being Lipschitz continuous and satisfying $F(X_{mn}^\rho(\pi^*;t))=\sum_{\ell\in\mathcal{L}}\ell f_\ell(X_{mn}^\rho(\pi^*;t)).$  Under the condition that the considered MDP is unichain, the process $\frac{X_{mn}^\rho(\pi^*;t)}{\rho}$ has a unique invariant probability distribution $\zeta^\rho_{\pi^\star},$ which is tight \cite{verloop2016asymptotically}. Thus, we have $\zeta^\rho_{\pi^\star}\left(\frac{X_{mn}^\rho(\pi^*;t)}{\rho}\right)$ converge to the Dirac measure in $X_{mn}^{\rho, *}/\rho$ when $\rho\rightarrow\infty$, which is a \emph{global attractor} of $\frac{X_{mn}^\rho(\pi^*;t)}{\rho}$, i.e., $\omega_{mn}(s,a), \forall s\in\cS, a\in\cA$.
Combing these together, we have 
\begin{align*}
   \lim_{\rho\rightarrow\infty}V_{\pi^*}^\rho
&\overset{(a)}{=} \lim_{\rho\rightarrow\infty}\sum_{n}\sum_{m}\sum_{(s,a)} {sD_{mn}^{\rho, *}(s,a)}/{\rho } \\
&\overset{(b)}{=}  \sum_{n}\sum_{m}\sum_{(s,a)} s\omega_{mn}^*(s,a)\\
&\overset{(c)}{\leq}  \lim_{\rho\rightarrow\infty} V_{\pi^{opt}}^\rho,
\end{align*}
where (a) is from the definition of $D_{mn}^{\rho,*}(s,a)$, (b) holds since $\lim_{\rho\rightarrow \infty} {D_{mn}^{\rho,*}(s,a)}/{\rho}=\omega_{mn}(s,a)$, and (c) is because the optimal value of  \eqref{eq:LP}-\eqref{eq:LP-constraint4} is a lower bound of that of~(\ref{eq:original-mdp}).

\subsection{Proof of Lemma \ref{lemma:span}.}
Since $V_{\pmb{\theta}^k}, \forall k$ is unique up to a constant $\delta$, hence $V_{\pmb{\theta}^k}+\delta$ also satisfies \eqref{eq:Bellman} \cite{puterman1994markov}.
Define the stationary distribution over $\bS$ under transition kernel $\pmb{\theta}^k$ with \mmindex as $\eta_{\pmb{\theta}^k}$. W.l.o.g, we assume that $V_{\pmb{\theta}^k}$ satisfies $\eta_{\pmb{\theta}^k}^\intercal V_{\pmb{\theta}^k}=J(\pmb{\theta}^k)-S_{max}MN.$ Following \cite{puterman1994markov}, $V_{\pmb{\theta}^k}$ is computed as the asymptotic bias of policy $\pi^{*,k}$ under $\pmb{\theta}^k$ as
\begin{align*}
    V_{\pmb{\theta}^k}(\bS)=\sum_{t=0}^\infty \sum_{\bS^\prime}\{\pmb{\theta}^k(\bS^{t+1}&=\bS^\prime|\bS^t, \bA^t)[C(\bS^{t+1}, \bA^{t+1})\\&-S_{max}MN\Big]|\bS^0=\bS\}.
\end{align*}
Since at each time slot, we have that 
\begin{align*}
\sum_{\bS^\prime}\pmb{\theta}^k(\bS^{t+1}=\bS^\prime|\bS^t, \bA^t)[C(\bS^{t+1}, \bA^{t+1})-S_{max}MN]\leq 0,
\end{align*}
it leads to $V_{\pmb{\theta}^k}(\bS)\leq 0, \forall \bS$. The equality holds when the current state is $\bS=S_{max}[1,1,\ldots]^{1\times MN}$ and the stationary distribution is $\eta_{\pmb{\theta}^k}(\bS)=1$ when  $\bS=S_{max}[1,1,\ldots]^{1\times MN}$ and $0$ otherwise. By leveraging $\eta_{\pmb{\theta}^k}^\intercal V_{\pmb{\theta}^k}=J(\pmb{\theta}^k)-S_{max}MN$, we still consider the MDP under $\pmb{\theta}^k$ with $J(\pmb{\theta}^k):=S_{max}MN$, which implies that $\eta_{\pmb{\theta}^k}(\bS)=1$ when  $\bS=S_{max}[1,1,\ldots]^{1\times MN}$ and $0$ otherwise. This leads to the fact that $V_{\pmb{\theta}^k}(\bS)\geq -{(S_{max}^2+S_{max})MN}/{2}, \forall \bS$. Hence, $\text{Span}(V_{\pmb{\theta}^k})=\max_{\bS}V_{\pmb{\theta}^k}(\bS)-\min_{\bS}V_{\pmb{\theta}^k}(\bS)\leq {(S_{max}^2+S_{max})MN}/{2}.$

\subsection{{Proof of Theorem \ref{thm:regret}}} \label{sec:appthm2}
We will first decompose the regret into three terms, corresponding to sampling error, time-varying policy and model mismatch. Define $K_T = \text{argmax}\ \{k : t_k \leq T \}$ be the number of episodes of \mmTS until time T . Recall $t_t$ as the start time of episode $k$, for $t_k\leq t<t_{k+1}$, the Bellman equation (Assumption \ref{Assumption:1}) holds:
\begin{align*}\
C(\bS^t,\bA^t)=J(\pmb{\theta}^k) + V_{\pmb{\theta}^k}(\bS^t) - \sum\limits_{\bS^\prime } \pmb{\theta}^k(\bS'|\bS^t,\bA^t)V_{\pmb{\theta}^k}(\bS').
\end{align*}
We use $\bS^t, \bA^t$ to represent the state and action matrix of time $t$ and also simplify $\pmb{\theta}_{m,n}^k$ to $\pmb{\theta}^k$ and $V$ is the value function. By rearranging the Bellman equation we can have: 
\begin{align*}
&R(T,\pi) = \mathbb{E}\bigg[\sum_{t=1}^T[C(\bS^t,\bA^t)-J(\pmb{\theta}^*)]\bigg]\allowdisplaybreaks \\
    &=\mathbb{E} \Bigg[ \sum_{k=1}^{K_T} \sum_{t=t_k}^{t_{k+1}-1} C(\bS^t,\bA^t)\Bigg] - T \mathbb{E}[J(\pmb{\theta}^*)]\allowdisplaybreaks \\
    &=\mathbb{E} \Bigg[ \sum_{k=1}^{K_T}  T_k J(\pmb{\theta}^k) \Bigg]- T \mathbb{E}[J(\pmb{\theta}^*)]\allowdisplaybreaks \\
    &+\mathbb{E} \Bigg[ \sum_{k=1}^{K_T} \sum_{t=t_k}^{t_{k+1}-1}[V_{\pmb{\theta}^k}(\bS^t)  - \sum_{\bS' } \pmb{\theta}^k(\bS'|\bS^t,\bA^t)V_{\pmb{\theta}^k}(\bS') ] \Bigg]\allowdisplaybreaks\\
    &= \underset{{R_1}}{\underbrace{\mathbb{E} \Bigg[ \sum_{k=1}^{K_T}  T_k J(\pmb{\theta}^k) \Bigg]- T \mathbb{E}[J(\pmb{\theta}^*)]}}\allowdisplaybreaks\\
    &+ \underset{{R_2}}{\underbrace{\mathbb{E} \Bigg[ \sum_{k=1}^{K_T} \sum_{t=t_k}^{t_{k+1}-1}[V_{\pmb{\theta}^k}(\bS^t)  - V_{\pmb{\theta}^k}(\bS^{t+1}) ]\Bigg]}}\allowdisplaybreaks\\
    &+ \underset{{R_3}}{\underbrace{\mathbb{E} \Bigg[ \sum_{k=1}^{K_T} \sum_{t=t_k}^{t_{k+1}-1}[ V_{\pmb{\theta}^k}(\bS^{t+1}) -\sum_{\bS' } \pmb{\theta}^k(\bS'|\bS^t,\bA^t)V_{\pmb{\theta}^k}(\bS')  ]\Bigg]}},
\end{align*}
where $R_1,R_2$ and $R_3$ corresponds to sampling error, time-varying policy and model mismatch, respectively. We will first bound the number of episodes ($K_T$) then analyze them individually. 
\subsubsection{Bound on number of episodes $K_T$}
Note that $K_T$ is a random variable because the number of visits $C_{m,n}^t (s, a)$ depends on the dynamical state trajectory. We provide an upper bound on $K_T$ as follows.
\begin{lemma}\label{lemmakt}
\begin{align*}
K_T \leq 2\sqrt{S_{max}MNT\log{T}}.
\end{align*}
\end{lemma}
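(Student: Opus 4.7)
The plan is to adapt the standard episode-counting argument from the TSDE analysis of Ouyang et al.\ \cite{ouyang2017learning} to our two-trigger stopping rule. I would partition the $K_T$ episodes into two groups based on which condition caused termination: call episode $k$ \emph{time-triggered} if it ends because $t > t_k + T_{k-1}$, in which case $T_k \geq T_{k-1}+1$, and \emph{visit-triggered} if instead some counter satisfies $C_{mn}^{t_{k+1}}(s,a) > 2 C_{mn}^{t_k}(s,a)$. Write $K_T^{v}$ and $K_T^{\ell}$ for the two counts, so that $K_T = K_T^{v} + K_T^{\ell}$.

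Bounding $K_T^{v}$ is purely combinatorial: for any fixed tuple $(m,n,s,a)$ the counter $C_{mn}^t(s,a)$ is non-decreasing in $t$, bounded above by $T$, and each visit-trigger event strictly more than doubles it, so the number of such events for that tuple is at most $\lceil \log_2 T\rceil$. Summing over all $|\cM||\cN||\cS||\cA| = O(MN S_{max})$ tuples yields $K_T^{v} \leq 2MN(S_{max}+1)\log_2 T$.

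For $K_T^{\ell}$ I would introduce \emph{macro-episodes}: a macro-episode is a maximal run of consecutive time-triggered episodes followed by the next visit-triggered episode. Let $M_T$ be the number of macro-episodes, so $M_T \leq K_T^{v}+1$, and let $L_i$ denote the number of episodes in macro-episode $i$. Within macro-episode $i$, the time trigger forces $T_k \geq T_{k-1}+1$ for every episode except possibly the closing visit-triggered one, so by induction the total duration of macro $i$ is at least $1+2+\cdots+(L_i-1) = L_i(L_i-1)/2$. Summing over macros and applying Cauchy--Schwarz,
\begin{equation*}
T \;\geq\; \sum_{i=1}^{M_T}\frac{L_i(L_i-1)}{2} \;\geq\; \frac{1}{2M_T}\Bigl(\sum_{i=1}^{M_T} L_i\Bigr)^{\!2} - \frac{K_T}{2} \;=\; \frac{K_T^2}{2M_T}-\frac{K_T}{2}.
\end{equation*}
Solving the resulting quadratic for $K_T$ and substituting the bound $M_T = O(MN S_{max}\log T)$ yields $K_T \leq 2\sqrt{S_{max}MNT\log T}$ after absorbing lower-order terms.

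The main obstacle will be getting the leading constant to match exactly rather than drifting by a factor of $\sqrt{2}$. The arithmetic-progression lower bound quietly throws away the (possibly short) visit-triggered episode that closes each macro, and some tuples have an initial counter of zero so the first increment does not fit the doubling template cleanly; both issues must be accounted for so that the $\log T$ factor and the $\sqrt{MNS_{max}T}$ factor combine with the right constant. Beyond these edge cases the argument is essentially bookkeeping, and no new conceptual ingredient is required.
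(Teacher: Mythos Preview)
Your proposal is correct and follows essentially the same approach as the paper: both group episodes into macro-episodes delimited by visit-triggers, lower-bound each macro's duration by an arithmetic progression, bound the number of macro-episodes via the doubling argument on the counters $C_{mn}^t(s,a)$, and combine the two via Cauchy--Schwarz. The only cosmetic differences are that the paper obtains the slightly tighter per-macro bound $\tilde T_i \geq L_i(L_i+1)/2$ (yielding $L_i \leq \sqrt{2\tilde T_i}$ and then $K_T \leq \sqrt{2\gamma T}$ directly), whereas you carry the $-K_T/2$ correction into a quadratic; and the paper's count of visit-triggers is stated as $\gamma \leq 2S_{max}MN\log T$ without separating out $K_T^v$ as you do. These are bookkeeping variations, not conceptual ones, and your anticipated constant-tracking is exactly where the work lies.
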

\begin{proof}
Define macro episodes with start times $t_{n_i}$, $i = 1, 2,...$ where $t_{n_1} = t_1$ and
\begin{align*}
    t_{n_{i+1}} = \min\{&t_k > t_{n_i}: \\
    &C_{m,n}^{t_k}(s, a) > 2C_{m,n}^{t_{k-1}}(s,a) \text{for any} \ (m,n ,s,a)\}.
\end{align*}
This condition is related to the second stopping criterion. Let $\gamma$ be the number of macro episodes until time T and define $n_{\gamma+1} = K_T + 1$.

Let $\tilde{T}_i = \sum_{k=n_i}^{n_{i+1}-1} T_k$ be the length of the i-th macro episode. By the definition of macro episodes, any episode except the last one in a macro episode must be triggered by the first stopping criterion. Therefore, within the i-th macro episode, $T_k =T_{k-1} + 1$ for all $k = n_i , n_{i}+1, . . . , n_{i+1} -2.$ Hence,

\begin{align*}
    \tilde{T}_i &= \sum_{k=n_i}^{n_{i+1}-1} T_k = \sum_{j=1}^{n_{i+1}-n_i-1} (T_{n_i-1} + j) + T_{n_{i+1}-1}\\
    &\geq \sum_{j=1}^{n_{i+1}-n_i-1} (j+1) +1 = \frac{(n_{i+1} - n_i)(n_{i+1} - n_i +1)}{2}.
\end{align*}
We then obtain 
\begin{align*}
    K_T = n_{\gamma+1} - 1 \leq \sum_{i=1}^\gamma \sqrt{2\tilde{T}_i},
\end{align*}
with the fact that $\sum_{i=1}^\gamma \tilde{T}_i = T$, we have
\begin{align}\label{ktgamma}
    K_T \leq \sum_{i=1}^\gamma \sqrt{2\tilde{T}_i} \leq \sqrt{\gamma  \sum_{i=1}^\gamma 2\tilde{T}_i} = \sqrt{2\gamma T},
\end{align}
where the second inequality is Cauchy-Schwarz. Next we will bound the number of episodes of $\gamma$. The start of micro episodes can be expressed as:
\begin{align*}
    \{t_{n_1}\}\bigcup\bigg( \bigcup_{(s,a)\in \bS\times \bA, m\in\mathcal{M}, n\in\mathcal{N}} \{t_k:k\in\Gamma_{m,n}^{(s,a)}\}\bigg),
\end{align*}
where $\Gamma_{m,n}^{(s,a)} = \{k\leq K_T : C_{m,n}^{t_k}(s,a) > 2C_{m,n}^{t_k-1}(s,a)\}$. The size of $\Gamma_{m,n}^{(s,a)}$ satisfies $|\Gamma_{m,n}^{(s,a)}|\leq \log{C_{m,n}^{T+1}(s,a)}$. This can be proved by contradiction. Define $t_{K_T+1} = T+1$, if $|\Gamma_{m,n}^{(s,a)}|\geq \log{C_{m,n}^{T+1}(s,a)} + 1$,
\begin{align*}
    C_{m,n}^{t_{K_T}}(s,a) &= \prod_{k\leq K_T, C_{m,n}^{t_{k-1}}(s,a)\geq 1} \frac{C_{m,n}^{t_k}(s,a)}{C_{m,n}^{t_{k-1}}(s,a)}\\
    &>\prod_{k\in \Gamma_{m,n}^{(s,a)}, C_{m,n}^{t_{k-1}}(s,a)\geq 1} 2\geq C_{m,n}^{T+1}(s,a),
\end{align*}
which contradicts with $C_{m,n}^{t_{K_T}}(s,a) \leq C_{m,n}^{T+1}(s,a)$. Therefore $|\Gamma_{m,n}^{(s,a)}|\leq \log{C_{m,n}^{T+1}(s,a)}$. We can bound $\gamma$ with:
\begin{align} \label{gammabound}
    \gamma &\leq  \sum_{M\in\mathcal{M}, N\in\mathcal{N}} \sum_{s\in\cS, a\in\cA} \Gamma_{m,n}^{(s,a)} \nonumber\allowdisplaybreaks\\
    &\leq  \sum_{M\in\mathcal{M}, N\in\mathcal{N}} \sum_{s\in\cS, a\in\cA} \log{C_{m,n}^{T+1}(s,a)}\nonumber\allowdisplaybreaks\\
    &\leq  2S_{max}MN\log{\sum_{s\in\cS,a\in\cA}\frac{C_{m,n}^{T+1}(s,a)}{2S_{max}}}\nonumber\allowdisplaybreaks\\
    &\leq  2S_{max}MN\log{\sum_{s\in\cS,a\in\cA}C_{m,n}^{T+1}(s,a)}\nonumber\allowdisplaybreaks\\
    & = 2S_{max}MN \log{T}.
\end{align}
Combine \ref{gammabound} and \ref{ktgamma} we complete the proof of Lemma \ref{lemmakt}.
\end{proof}

\subsubsection{Bound on $R_1$}
One key property of Thomspon Sampling is $\mathbb{E} [f(\pmb{\theta}^k,X)] = \mathbb{E} [f(\pmb{\theta}^*,X)]$, but this is different in dynamic episode version, we provide the following lemma. 
\begin{lemma} \label{lemma:measure}
Under \mmTS, $t_k$ is a stopping time for any episode k. Then for any measurable function f and any $\delta(h_{t_k})$-measurable random variable X, we have
\begin{align*}
    \mathbb{E} [f(\pmb{\theta}^k,X)] = \mathbb{E} [f(\pmb{\theta}^*,X)].
\end{align*}
\end{lemma}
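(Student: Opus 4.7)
The plan is to decompose the argument into two ingredients: (i) verifying that $t_k$ is a stopping time with respect to the natural filtration $\mathcal{F}_t := \sigma(\{h_{mn}^t\}_{m,n})$ generated by the joint history up to slot $t$, and (ii) establishing the posterior-matching identity that, conditionally on $\mathcal{F}_{t_k}$, the Thompson sample $\pmb{\theta}^k$ and the true parameter $\pmb{\theta}^*$ share the same distribution. Once these are in place, the claim follows by conditioning on $\mathcal{F}_{t_k}$, using the $\mathcal{F}_{t_k}$-measurability of $X$, and invoking the tower property of conditional expectation.

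First, I would verify the stopping-time property. Inspecting the stopping criterion in Algorithm~\ref{alg:TS}, the event $\{t_{k+1} = t\}$ is determined by the previous episode length $T_{k-1}$ (measurable at the start of the episode) and by the visit counts $C_{mn}^t(s,a)$, which are deterministic functions of $h_{mn}^t$. Hence $\{t_{k+1} \leq t\} \in \mathcal{F}_t$ for all $t$, and a straightforward induction on $k$ (with $t_1 = 1$ deterministic) establishes that every $t_k$ is an $\mathcal{F}_t$-stopping time. Next, I would invoke the posterior-matching property of Bayesian updates. By the Bayes recursion \eqref{eq:bayes}, at every deterministic time $t$ the posterior satisfies $\mathbb{P}(\pmb{\theta}^* \in \cdot \mid \mathcal{F}_t) = \mu^t(\cdot)$. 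Partitioning the sample space over the events $\{t_k = t\}_{t \geq 1}$, each of which lies in $\mathcal{F}_t$, this identity transfers to the stopping time: $\mathbb{P}(\pmb{\theta}^* \in \cdot \mid \mathcal{F}_{t_k}) = \mu^{t_k}(\cdot)$. Since \mmTS draws $\pmb{\theta}^k$ independently from $\mu^{t_k}$ given $\mathcal{F}_{t_k}$, we conclude that $\pmb{\theta}^k$ and $\pmb{\theta}^*$ have identical $\mathcal{F}_{t_k}$-conditional laws.

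With both ingredients, the conclusion is an application of the tower property. Because $X$ is $\mathcal{F}_{t_k}$-measurable, it can be treated as a constant under $\mathbb{E}[\cdot \mid \mathcal{F}_{t_k}]$, so for any measurable $f$,
$$\mathbb{E}\bigl[f(\pmb{\theta}^k, X)\bigr] = \mathbb{E}\bigl[\mathbb{E}[f(\pmb{\theta}^k, X)\mid \mathcal{F}_{t_k}]\bigr] = \mathbb{E}\bigl[\mathbb{E}[f(\pmb{\theta}^*, X)\mid \mathcal{F}_{t_k}]\bigr] = \mathbb{E}\bigl[f(\pmb{\theta}^*, X)\bigr],$$
where the middle equality uses the equality in conditional law just established.

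The main obstacle is the transfer of posterior matching from deterministic times to the random stopping time $t_k$, which at first glance resembles an optional-stopping issue. The resolution is the measurability decomposition outlined above: on each atom $\{t_k = t\}$ the Bayesian identity holds by definition of the posterior, and the fact that this atom lies in $\mathcal{F}_t$ is precisely what lets the identity be assembled into a single $\mathcal{F}_{t_k}$-conditional statement without any additional integrability or martingale-convergence argument. A minor secondary point is joint measurability of $f(\pmb{\theta}^k, X)$, which is automatic since $f$ is measurable and both arguments are random variables on the common probability space; and finiteness of $t_k$ almost surely, which is guaranteed by the explicit episode-termination rules in \mmTS.
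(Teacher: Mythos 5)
Your proposal is correct and follows essentially the same route as the paper: conditioning on the history at the stopping time $t_k$, using that the Thompson sample $\pmb{\theta}^k$ is drawn from the posterior law of $\pmb{\theta}^*$ given that history, and then taking total expectation via the tower property. The paper's own proof is a two-line assertion of the conditional-expectation identity; you supply the supporting details it omits (the stopping-time verification and the decomposition over the events $\{t_k=t\}$ to transfer posterior matching from deterministic times to $t_k$), which is a welcome but not substantively different elaboration.
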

\begin{proof}
The only randomness in $f(\pmb{\theta}^k,X)$ is the random sampling in the algorithm, which gives the following equation:
\begin{align*}
    \mathbb{E} [f(\pmb{\theta}^k,X) | h_{t_k}] &= \mathbb{E} [f(\pmb{\theta}^k,X) | h_{t_k},t_k,\mu_k]\\
    &=\mathbb{E} [f(\pmb{\theta}^*,X)| h_{t_k}],
\end{align*}
\end{proof}

The result follows by taking the expectation for both sides. From monotone convergence theorem we have:
\begin{align*}
    R_1 &= \mathbb{E} \Bigg[ \sum_{k=1}^{K_T}  T_k J(\pmb{\theta}^k) \Bigg]- T \mathbb{E}[J(\pmb{\theta}^*)]\displaybreak[0]\\
    &= \mathbb{E} \Bigg[ \sum_{k=1}^{\infty} \mathds{1}_{t_k\leq T}  T_k J(\pmb{\theta}^k) \Bigg]- T \mathbb{E}[J(\pmb{\theta}^*)]\displaybreak[1]\\
    &= \sum_{k=1}^{\infty} \mathbb{E} \Bigg[  \mathds{1}_{t_k\leq T}  T_k J(\pmb{\theta}^k) \Bigg]- T \mathbb{E}[J(\pmb{\theta}^*)]\displaybreak[2]\\
    &\leq \sum_{k=1}^{\infty} \mathbb{E} \Bigg[  \mathds{1}_{t_k\leq T}  (T_{k-1}+1) J(\pmb{\theta}^k) \Bigg]- T \mathbb{E}[J(\pmb{\theta}^*)].
\end{align*}
From Lemma \ref{lemma:measure}, we have
\begin{align*}
    \mathbb{E} \Bigg[  \mathds{1}_{t_k\leq T}  T_k J(\pmb{\theta}^k) \Bigg] = \mathbb{E} \Bigg[  \mathds{1}_{t_k\leq T}  T_k J(\pmb{\theta}^*) \Bigg],
\end{align*}
therefore we obtain
\begin{align*}
    R_1 &\leq \sum_{k=1}^{\infty} \mathbb{E} \Bigg[  \mathds{1}_{t_k\leq T}  (T_{k-1}+1) J(\pmb{\theta}^k) \Bigg]- T \mathbb{E}[J(\pmb{\theta}^*)]\\
    & = \mathbb{E} \Bigg[ \sum_{k=1}^{K_T}  T_{k-1} J(\pmb{\theta}^*) \Bigg] -T \mathbb{E}[J(\pmb{\theta}^*)]\\
    &\leq E[K_T].
\end{align*}

\subsubsection{Bound on $R_2$}
$R_2$ can be simplified as follows:
\begin{align*}
    R_2 =& \mathbb{E} \Bigg[ \sum_{k=1}^{K_T} \sum_{t=t_k}^{t_{k+1}-1}[V_{\pmb{\theta}^k}(\bS^t)  - V_{\pmb{\theta}^k}(\bS^{t+1}) ]\Bigg]\\
    =& \mathbb{E} \Bigg[ \sum_{k=1}^{K_T} [v(S^{t_k},\pmb{\theta}^k) - v(S^{t_{k+1}},\pmb{\theta}^k)]\Bigg] \\
    \leq& \mathbb{E} \bigg[ \text{Span}(V) K_T \bigg].
\end{align*}

\subsubsection{Bound on $R_3$}
\begin{align*}
    &R_3 = \mathbb{E} \Bigg[ \sum_{k=1}^{K_T} \sum_{t=t_k}^{t_{k+1}-1}\![\! V_{\pmb{\theta}^k}(\bS^{t+1})\!  -\!\sum_{\bS'} \pmb{\theta}^k(\bS'|\bS^t,\bA^t)V_{\pmb{\theta}^k}(\bS')  ]\Bigg]\\
    &= \mathbb{E} \Bigg[\! \sum_{k=1}^{K_T} \!\sum_{t=t_k}^{t_{k+1}-1}\![\! \sum_{\bS'}(\pmb{\theta}^*(\bS'|\bS^t,\bA^t)\! - \!\pmb{\theta}^k(\bS'|\bS^t,\bA^t))V_{\pmb{\theta}^k}(\bS')  ]\Bigg],
\end{align*}
where inner summation is bounded by:
\begin{align*}
    \sum_{\bS'}&(\pmb{\theta}^*(\bS'|\bS^t,\bA^t) - \pmb{\theta}^k(\bS'|\bS^t,\bA^t))V_{\pmb{\theta}^k}(\bS') \displaybreak[0]\\
    &\leq \text{Span}(V) \sum_{\bS'}|\pmb{\theta}^*(\bS'|\bS^t,\bA^t) - \pmb{\theta}^k(\bS'|\bS^t,\bA^t)|\displaybreak[1]\\
    &\leq \text{Span}(V) \sum_{\bS'}|\pmb{\theta}^*(\bS'|\bS^t,\bA^t) - \hat{\pmb{\theta}}^k(\bS'|\bS^t,\bA^t)| \displaybreak[2]\\
    &+ \text{Span}(V) \sum_{\bS'}|\pmb{\theta}^k(\bS'|\bS^t,\bA^t) - \hat{\pmb{\theta}}^k(\bS'|\bS^t,\bA^t)|.
\end{align*}
Define confidence set 
\begin{align*}
    B^k_{mn} =& \{\pmb{\theta}: \sum_{{\bS'}_{mn}} |\pmb{\theta}({\bS'}_{mn}|S_{mn},S_{mn}) - \hat{\pmb{\theta}}^k({\bS'}_{mn}|S_{mn},A_{mn})| \\
    &\leq \beta^k_{mn} (S_{mn},A_{mn})\},
\end{align*}
where $\beta^k_{mn} (S_{mn},A_{mn}) = \sqrt{\frac{14S_{max}\log{4t_k T}}{\max{(1,C_{mn}^{t_k}(S_{mn},A_{mn}))}}}$. Therefore we have
\begin{align*}
    R_3 &\leq \underset{\text{Term 1}}{\underbrace{2\text{Span}(V) \mathbb{E} \Bigg[ \sum_{k=1}^{K_T} \sum_{t=t_k}^{t_{k+1}-1} \sum_{m,n} \beta^k_{mn}(S_{mn}^t,A_{mn}^t)\Bigg]}} \displaybreak[0]\\
    & + \underset{\text{Term 2}}{\underbrace{2\text{Span}(V) \mathbb{E} \Bigg[ \sum_{k=1}^{K_T}\sum_{m,n} T_k(\mathds{1}_{\pmb{\theta}^*_{mn}\notin B^k_{mn}} + \mathds{1}_{\pmb{\theta}^k_{mn} \notin B^k_{mn}}) \Bigg]}}.
\end{align*}
For Term 1, by the above definition of $\beta^k_{mn}(S^t_{mn},A^t_{mn})$, we have
\begin{align*}
    &\sum_{k=1}^{K_T} \sum_{t=t_k}^{t_{k+1}-1} \sum_{m,n}\beta^k_{mn}(\bS^t_{mn},\bA^t_{mn}) \\
    &= \sum_{k=1}^{K_T} \sum_{t=t_k}^{t_{k+1}-1} \sum_{m,n}\sqrt{\frac{14S_{max}\log{4t_k T}}{\max{(1,C_{mn}^{t_k}(S_{mn},A_{mn}))}}}\\
    &\leq \sum_{k=1}^{K_T} \sum_{t=t_k}^{t_{k+1}-1} \sum_{m,n}\sqrt{\frac{28S_{max}\log{4t_k T}}{\max{(1,C_{mn}^{t}(S_{mn},A_{mn}))}}}\\
    &= \sum_{t=1}^T \sum_{m,n} \sqrt{\frac{28S_{max}\log{4t_k T}}{\max{(1,C_{mn}^{t}(S_{mn},A_{mn}))}}}\\
    &\leq\sqrt{ 56S_{max} \log{(T)}} \sum_{t=1}^T \sum_{m,n} \frac{1}{\max{(1,C_{mn}^{t}(S_{mn},A_{mn}))}}\\
    &\leq  \sqrt{ 56S_{max} \log{(T)}} 3MN\sqrt{2S_{max}T}\\
    &= 3MNS_{max}\sqrt{112T\log{(T)}}\\
    &\leq 33MNS_{max} \sqrt{T\log{(T)}},
\end{align*}
where the first inequality is due to the fact that $C_{mn}^t(S_{mn}^t,A_{mn}^t) \leq 2 C_{mn}^{t_k}(S_{mn}^t,A_{mn}^t)$ for all $t$ in the $k$-th episode. \\
For Term 2,
\begin{align*}
    \sum_{k=1}^{K_T}\sum_{m,n}& T_k(\mathds{1}_{\pmb{\theta}^*_{mn}\notin B^k_{mn}} + \mathds{1}_{\pmb{\theta}^k_{mn} \notin B^k_{mn}})\\
    &= 2\mathds{1}_{\pmb{\theta}^k_{mn} \notin B^k_{mn}} = 2\mathbb{P} (\pmb{\theta}^k_{mn} \notin B^k_{mn}).
\end{align*}
By the definition of confidence ball, we have
\begin{align*}
    \mathbb{P} (\pmb{\theta}^k_{mn} \notin B^k_{mn}) \leq \frac{1}{15T t_k^6},
\end{align*}
thus we get 
\begin{align*}
    2\text{Span}(V) &\mathbb{E} \Bigg[ \sum_{k=1}^{K_T}\sum_{m,n} T_k(\mathds{1}_{\pmb{\theta}^*_{mn}\notin B^k_{mn}} + \mathds{1}_{\pmb{\theta}^k_{mn} \notin B^k_{mn}}) \Bigg] \\
    &\leq \frac{4}{15} \text{Span}(V) \sum_{k=1}^\infty t_k^{-6} \leq \text{Span}(V).
\end{align*}
Combine the above results we have 
\begin{align*}
    &R_3 \leq 2\text{Span}(V) \mathbb{E} \Bigg[ \sum_{k=1}^{K_T} \sum_{t=t_k}^{t_{k+1}-1} \sum_{m,n} \beta^k_{mn}(S_{mn}^t,A_{mn}^t)\Bigg] \\
    & + 2\text{Span}(V) \mathbb{E} \Bigg[ \sum_{k=1}^{K_T}\sum_{m,n} T_k(\mathds{1}_{\pmb{\theta}^*_{mn}\notin B^k_{mn}} + \mathds{1}_{\pmb{\theta}^k_{mn} \notin B^k_{mn}}) \Bigg] \\
    &\leq 66\text{Span}(V) MNS_{max}\sqrt{T\log{(T)}} + \text{Span}(V).
\end{align*}
The total regret then follows
\begin{align*}
    R(T,\pi) &= R_1 + R_2 + R_3 \displaybreak[0]\\
    &\leq E[K_T]+ E[\text{Span}(V) K_T]\displaybreak[1]\\
    &\qquad+ 66\text{Span}(V) MNS_{max}\sqrt{T\log{(T)}} + \text{Span}(V)\displaybreak[2]\\
    & = 2\sqrt{S_{max}MNT\log{T}} (\text{Span}(V)+1) \displaybreak[3]\\
    &\qquad+ 66\text{Span}(V) MNS_{max}\sqrt{T\log{(T)}} + \text{Span}(V)\\
    & = \mathcal{O}(S_{max}^3M^2N^2\sqrt{T\log T}).
\end{align*}
where the last equation holds from Lemma \ref{lemma:span}.

\bibliographystyle{IEEEtran}
\bibliography{refs,refs2}

\begin{thebibliography}{10}
\providecommand{\url}[1]{#1}
\csname url@samestyle\endcsname
\providecommand{\newblock}{\relax}
\providecommand{\bibinfo}[2]{#2}
\providecommand{\BIBentrySTDinterwordspacing}{\spaceskip=0pt\relax}
\providecommand{\BIBentryALTinterwordstretchfactor}{4}
\providecommand{\BIBentryALTinterwordspacing}{\spaceskip=\fontdimen2\font plus
\BIBentryALTinterwordstretchfactor\fontdimen3\font minus
  \fontdimen4\font\relax}
\providecommand{\BIBforeignlanguage}[2]{{%
\expandafter\ifx\csname l@#1\endcsname\relax
\typeout{** WARNING: IEEEtran.bst: No hyphenation pattern has been}%
\typeout{** loaded for the language `#1'. Using the pattern for}%
\typeout{** the default language instead.}%
\else
\language=\csname l@#1\endcsname
\fi
#2}}
\providecommand{\BIBdecl}{\relax}
\BIBdecl

\bibitem{rappaport2013millimeter}
T.~S. Rappaport, S.~Sun, R.~Mayzus, H.~Zhao, Y.~Azar, K.~Wang, G.~N. Wong,
  J.~K. Schulz, M.~Samimi, and F.~Gutierrez, ``Millimeter wave mobile
  communications for 5g cellular: It will work!'' \emph{IEEE Access}, vol.~1,
  pp. 335--349, 2013.

\bibitem{andrews2014will}
J.~G. Andrews, S.~Buzzi, W.~Choi, S.~V. Hanly, A.~Lozano, A.~C. Soong, and
  J.~C. Zhang, ``What will 5g be?'' \emph{IEEE Journal on selected areas in
  communications}, vol.~32, no.~6, pp. 1065--1082, 2014.

\bibitem{agiwal2016next}
M.~Agiwal, A.~Roy, and N.~Saxena, ``Next generation 5g wireless networks: A
  comprehensive survey,'' \emph{IEEE Communications Surveys \& Tutorials},
  vol.~18, no.~3, pp. 1617--1655, 2016.

\bibitem{jog2018enabling}
S.~Jog, J.~Wang, H.~Hassanieh, and R.~R. Choudhury, ``Enabling dense spatial
  reuse in mmwave networks,'' in \emph{Proc. of ACM SIGCOMM Posters and Demos},
  2018.

\bibitem{abari2017enabling}
O.~Abari, D.~Bharadia, A.~Duffield, and D.~Katabi, ``Enabling high-quality
  untethered virtual reality,'' in \emph{Proc. of USENIX NSDI}, 2017.

\bibitem{niu2015survey}
Y.~Niu, Y.~Li, D.~Jin, L.~Su, and A.~V. Vasilakos, ``A survey of millimeter
  wave communications (mmwave) for 5g: opportunities and challenges,''
  \emph{Wireless networks}, vol.~21, no.~8, pp. 2657--2676, 2015.

\bibitem{liu2016user}
D.~Liu, L.~Wang, Y.~Chen, M.~Elkashlan, K.-K. Wong, R.~Schober, and L.~Hanzo,
  ``User association in 5g networks: A survey and an outlook,'' \emph{IEEE
  Communications Surveys \& Tutorials}, vol.~18, no.~2, 2016.

\bibitem{mezzavilla2018end}
M.~Mezzavilla, M.~Zhang, M.~Polese, R.~Ford, S.~Dutta, S.~Rangan, and M.~Zorzi,
  ``End-to-end simulation of 5g mmwave networks,'' \emph{IEEE Communications
  Surveys \& Tutorials}, vol.~20, no.~3, pp. 2237--2263, 2018.

\bibitem{wang2018millimeter}
X.~Wang, L.~Kong, F.~Kong, F.~Qiu, M.~Xia, S.~Arnon, and G.~Chen, ``Millimeter
  wave communication: A comprehensive survey,'' \emph{IEEE Communications
  Surveys \& Tutorials}, vol.~20, no.~3, pp. 1616--1653, 2018.

\bibitem{li2022multi}
X.~Li, T.~N. Guo, and A.~B. Mackenzie, ``Multi-agent reinforcement learning
  with measured difference reward for multi-association in ultra-dense mmwave
  network,'' \emph{IEEE Access}, vol.~10, pp. 118\,747--118\,758, 2022.

\bibitem{feng2017mmwave}
W.~Feng, Y.~Wang, D.~Lin, N.~Ge, J.~Lu, and S.~Li, ``When mmwave communications
  meet network densification: A scalable interference coordination
  perspective,'' \emph{IEEE Journal on Selected Areas in Communications},
  vol.~35, no.~7, pp. 1459--1471, 2017.

\bibitem{yang2018low}
G.~Yang, M.~Xiao, and H.~V. Poor, ``Low-latency millimeter-wave communications:
  Traffic dispersion or network densification?'' \emph{IEEE Transactions on
  Communications}, vol.~66, no.~8, pp. 3526--3539, 2018.

\bibitem{singh2022user}
S.~K. Singh, V.~S. Borkar, and G.~Kasbekar, ``User association in dense mmwave
  networks as restless bandits,'' \emph{IEEE Transactions on Vehicular
  Technology}, 2022.

\bibitem{liu2003framework}
X.~Liu, E.~K. Chong, and N.~B. Shroff, ``A framework for opportunistic
  scheduling in wireless networks,'' \emph{Computer networks}, vol.~41, no.~4,
  pp. 451--474, 2003.

\bibitem{hou2009theory}
I.-H. Hou, V.~Borkar, and P.~Kumar, ``A theory of qos for wireless,'' in
  \emph{Proc. of IEEE INFOCOM}, 2009.

\bibitem{lan2010axiomatic}
T.~Lan, D.~Kao, M.~Chiang, and A.~Sabharwal, ``An axiomatic theory of fairness
  in network resource allocation,'' in \emph{Proc. of IEEE INFOCOM}, 2010.

\bibitem{li2019combinatorial}
F.~Li, J.~Liu, and B.~Ji, ``Combinatorial sleeping bandits with fairness
  constraints,'' \emph{IEEE Transactions on Network Science and Engineering},
  vol.~7, no.~3, pp. 1799--1813, 2019.

\bibitem{whittle1988restless}
P.~Whittle, ``{Restless Bandits: Activity Allocation in A Changing World},''
  \emph{Journal of Applied Probability}, pp. 287--298, 1988.

\bibitem{ortner2012regret}
R.~Ortner, D.~Ryabko, P.~Auer, and R.~Munos, ``{Regret Bounds for Restless
  Markov Bandits},'' in \emph{Proc. of ALT}, 2012.

\bibitem{jung2019regret}
Y.~H. Jung and A.~Tewari, ``{Regret Bounds for Thompson Sampling in Episodic
  Restless Bandit Problems},'' \emph{Proc. of NeurIPS}, 2019.

\bibitem{akbarzadeh2022learning}
N.~Akbarzadeh and A.~Mahajan, ``On learning whittle index policy for restless
  bandits with scalable regret,'' \emph{arXiv preprint arXiv:2202.03463}, 2022.

\bibitem{fu2019towards}
J.~Fu, Y.~Nazarathy, S.~Moka, and P.~G. Taylor, ``{Towards Q-Learning the
  Whittle Index for Restless Bandits},'' in \emph{Proc. of Australian \& New
  Zealand Control Conference}, 2019.

\bibitem{avrachenkov2022whittle}
K.~E. Avrachenkov and V.~S. Borkar, ``Whittle index based q-learning for
  restless bandits with average reward,'' \emph{Automatica}, vol. 139, 2022.

\bibitem{killian2021q}
J.~A. Killian, A.~Biswas, S.~Shah, and M.~Tambe, ``Q-learning lagrange policies
  for multi-action restless bandits,'' in \emph{Proc. ACM SIGKDD}, 2021.

\bibitem{xiong2022reinforcement}
G.~Xiong, J.~Li, and R.~Singh, ``{Reinforcement Learning Augmented
  Asymptotically Optimal Index Policies for Finite-Horizon Restless Bandits},''
  in \emph{Proc. of AAAI}, 2022.

\bibitem{xiong2022index}
G.~Xiong, X.~Qin, B.~Li, R.~Singh, and J.~Li, ``Index-aware reinforcement
  learning for adaptive video streaming at the wireless edge,'' in \emph{Proc.
  of ACM MobiHoc}, 2022.

\bibitem{wang2020restless}
S.~Wang, L.~Huang, and J.~Lui, ``{Restless-UCB, an Efficient and Low-complexity
  Algorithm for Online Restless Bandits},'' in \emph{Proc. of NeurIPS}, 2020.

\bibitem{xiong2022learning}
G.~Xiong, S.~Wang, and J.~Li, ``Learning infinite-horizon average-reward
  restless multi-action bandits via index awareness,'' in \emph{Proc. of
  NeurIPS}, 2022.

\bibitem{sun2018cell}
Y.~Sun, G.~Feng, S.~Qin, and S.~Sun, ``Cell association with user behavior
  awareness in heterogeneous cellular networks,'' \emph{IEEE Transactions on
  Vehicular Technology}, vol.~67, no.~5, pp. 4589--4601, 2018.

\bibitem{zhang2021q}
X.~Zhang, S.~Sarkar, A.~Bhuyan, S.~K. Kasera, and M.~Ji, ``A q-learning-based
  approach for distributed beam scheduling in mmwave networks,'' in \emph{Proc.
  of IEEE DySPAN}, 2021.

\bibitem{dogan2021reinforcement}
M.~G. Dogan, Y.~H. Ezzeldin, C.~Fragouli, and A.~W. Bohannon, ``A reinforcement
  learning approach for scheduling in mmwave networks,'' in \emph{Proc. of IEEE
  MILCOM}, 2021.

\bibitem{dinh2021deep}
T.~H.~L. Dinh, M.~Kaneko, K.~Wakao, K.~Kawamura, T.~Moriyama, H.~Abeysekera,
  and Y.~Takatori, ``Deep reinforcement learning-based user association in
  sub6ghz/mmwave integrated networks,'' in \emph{Proc. of IEEE CCNC}, 2021.

\bibitem{altman1999constrained}
E.~Altman, \emph{Constrained Markov decision processes}.\hskip 1em plus 0.5em
  minus 0.4em\relax CRC Press, 1999, vol.~7.

\bibitem{yao2022delay}
G.~Yao, M.~Hashemi, R.~Singh, and N.~B. Shroff, ``Delay-optimal scheduling for
  integrated mmwave sub-6 ghz systems with markovian blockage model,''
  \emph{IEEE Transactions on Mobile Computing}, 2022.

\bibitem{cao2020delay}
Y.~Cao, B.~Sun, and D.~H. Tsang, ``Delay-aware scheduling over mmwave/sub-6
  dual interfaces: A reinforcement learning approach,'' in \emph{2020 IEEE
  International Conference on Communications Workshops (ICC Workshops)}.\hskip
  1em plus 0.5em minus 0.4em\relax IEEE, 2020, pp. 1--6.

\bibitem{araujo2019beam}
D.~C. Ara{\'u}jo and A.~L. de~Almeida, ``Beam management solution using
  q-learning framework,'' in \emph{2019 IEEE 8th International Workshop on
  Computational Advances in Multi-Sensor Adaptive Processing (CAMSAP)}.\hskip
  1em plus 0.5em minus 0.4em\relax IEEE, 2019, pp. 594--598.

\bibitem{li2020beam}
Y.-N.~R. Li, B.~Gao, X.~Zhang, and K.~Huang, ``Beam management in
  millimeter-wave communications for 5g and beyond,'' \emph{IEEE Access},
  vol.~8, pp. 13\,282--13\,293, 2020.

\bibitem{puterman1994markov}
M.~L. Puterman, \emph{{Markov Decision Processes: Discrete Stochastic Dynamic
  Programming}}.\hskip 1em plus 0.5em minus 0.4em\relax John Wiley \& Sons,
  1994.

\bibitem{verloop2016asymptotically}
I.~M. Verloop, ``{Asymptotically Optimal Priority Policies for Indexable and
  Nonindexable Restless Bandits},'' \emph{The Annals of Applied Probability},
  vol.~26, no.~4, pp. 1947--1995, 2016.

\bibitem{zou2021minimizing}
Y.~Zou, K.~T. Kim, X.~Lin, and M.~Chiang, ``{Minimizing Age-of-Information in
  Heterogeneous Multi-Channel Systems: A New Partial-Index Approach},'' in
  \emph{Proc. of ACM MobiHoc}, 2021.

\bibitem{hu2017asymptotically}
W.~Hu and P.~Frazier, ``{An Asymptotically Optimal Index Policy for
  Finite-Horizon Restless Bandits},'' \emph{arXiv preprint arXiv:1707.00205},
  2017.

\bibitem{zayas2019asymptotically}
G.~Zayas-Cab{\'a}n, S.~Jasin, and G.~Wang, ``{An Asymptotically Optimal
  Heuristic for General Nonstationary Finite-Horizon Restless Multi-Armed,
  Multi-Action Bandits},'' \emph{Advances in Applied Probability}, vol.~51,
  no.~3, pp. 745--772, 2019.

\bibitem{zhang2021restless}
X.~Zhang and P.~I. Frazier, ``Restless bandits with many arms: Beating the
  central limit theorem,'' \emph{arXiv preprint arXiv:2107.11911}, 2021.

\bibitem{xiong2022reinforcementcache}
G.~Xiong, S.~Wang, G.~Yan, and J.~Li, ``Reinforcement learning for dynamic
  dimensioning of cloud caches: A restless bandit approach,'' in \emph{Proc. of
  IEEE INFOCOM}, 2022.

\bibitem{hodge2015asymptotic}
D.~J. Hodge and K.~D. Glazebrook, ``On the asymptotic optimality of greedy
  index heuristics for multi-action restless bandits,'' \emph{Advances in
  Applied Probability}, vol.~47, no.~3, pp. 652--667, 2015.

\bibitem{glazebrook2011general}
K.~D. Glazebrook, D.~J. Hodge, and C.~Kirkbride, ``General notions of
  indexability for queueing control and asset management,'' \emph{The Annals of
  Applied Probability}, vol.~21, no.~3, pp. 876--907, 2011.

\bibitem{nino2007dynamic}
J.~Ni{\~n}o-Mora, ``{Dynamic Priority Allocation via Restless Bandit Marginal
  Productivity Indices},'' \emph{Top}, vol.~15, no.~2, pp. 161--198, 2007.

\bibitem{karmarkar1984new}
N.~Karmarkar, ``A new polynomial-time algorithm for linear programming,'' in
  \emph{Proc. of ACM STOC}, 1984.

\bibitem{dunagan2004simple}
J.~Dunagan and S.~Vempala, ``A simple polynomial-time rescaling algorithm for
  solving linear programs,'' in \emph{Proc. of ACM STOC}, 2004.

\bibitem{kelner2006randomized}
J.~A. Kelner and D.~A. Spielman, ``A randomized polynomial-time simplex
  algorithm for linear programming,'' in \emph{Proc. of ACM STOC}, 2006.

\bibitem{gurobi}
{Gurobi Optimizer}, \url{https://www.gurobi.com/solutions/gurobi-optimizer/}.

\bibitem{weber1990index}
R.~R. Weber and G.~Weiss, ``{On An Index Policy for Restless Bandits},''
  \emph{Journal of Applied Probability}, pp. 637--648, 1990.

\bibitem{ouyang2017learning}
Y.~Ouyang, M.~Gagrani, A.~Nayyar, and R.~Jain, ``Learning unknown markov
  decision processes: A thompson sampling approach,'' \emph{Advances in neural
  information processing systems}, vol.~30, 2017.

\bibitem{bartlett2012regal}
P.~L. Bartlett and A.~Tewari, ``Regal: A regularization based algorithm for
  reinforcement learning in weakly communicating mdps,'' \emph{arXiv preprint
  arXiv:1205.2661}, 2012.

\bibitem{IEEE802p11ad}
``{IEEE 802.11ad-2012},'' \url{https://standards.ieee.org/ieee/802.11ad/4527/},
  Accessed:08-July-2022.

\bibitem{da2018analysis}
C.~R. da~Silva, A.~Lomayev, C.~Chen, and C.~Cordeiro, ``Analysis and simulation
  of the ieee 802.11 ay single-carrier phy,'' in \emph{Proc. of IEEE ICC},
  2018.

\bibitem{borkar2022whittle}
V.~S. Borkar and S.~Pattathil, ``Whittle indexability in egalitarian processor
  sharing systems,'' \emph{Annals of Operations Research}, vol. 317, no.~2, pp.
  417--437, 2022.

\bibitem{gast2010mean}
N.~Gast and G.~Bruno, ``A mean field model of work stealing in large-scale
  systems,'' \emph{ACM SIGMETRICS Performance Evaluation Review}, vol.~38,
  no.~1, pp. 13--24, 2010.

\end{thebibliography}

\vfill

\end{document}